\newcommand{\thickhline}{%
    \noalign {\ifnum 0=`}\fi \hrule height 1pt
    \futurelet \reserved@a \@xhline
}
\newcolumntype{"}{@{\hskip\tabcolsep\vrule width 1pt\hskip\tabcolsep}}
\newtheoremstyle{noparens}%
  {}{}%
  {\itshape}{}%
  {\bfseries}{.}%
  { }%
  {\thmname{#1}\thmnumber{ #2}\mdseries\thmnote{ #3}}
\theoremstyle{noparens}  
\newtheorem{lemma}{Lemma}
\newtheorem{theorem}{Theorem}
 \newtheorem{definition}{Definition}
\def\be{\begin{align}}
\def\ee{\end{align}}
\def\ba{\begin{array}}
\def\ea{\end{array}}
\def\Tr{\mathrm{Tr}}
\newcommand{\mD}{\mathcal{D}}
\newcommand{\mH}{\mathcal{H}}
\newcommand{\mE}{\mathcal{E}}
\newcommand{\mS}{\mathcal{S}}
\newcommand{\mN}{\mathcal{N}}
\newcommand{\mO}{\mathcal{O}}
\newcommand{\mI}{\mathbbm{I}}
\newcommand{\id}{\text{id}}
\definecolor{daxColor}{HTML}{900C3F}
\newcommand*{\rom}[1]{\expandafter\@slowromancap\romannumeral #1@}
\begin{document}


\title{Teleportation with Embezzling Catalysts}
\author{Junjing~Xing}
\thanks{These authors contributed equally}
 \affiliation{College of Intelligent Systems Science and Engineering, Harbin Engineering University, Nantong Street, Harbin, 150001, Heilongjiang, People's Republic of China}

\author{Yuqi Li}
\thanks{These authors contributed equally}
 \affiliation{College of Mathematics Science, Harbin Engineering University, Nantong Street, Harbin, 150001, Heilongjiang, People's Republic of China}

\author{Dengke Qu}
 \affiliation{School of Physics, Southeast University, Nanjing, 211189, People's Republic of China}
 \affiliation{Beijing Computational Science Research Center, Beijing, 100193, People's Republic of China}

\author{Lei Xiao}
 \affiliation{School of Physics, Southeast University, Nanjing, 211189, People's Republic of China}

\author{Zhaobing~Fan}
\email{fanzhaobing@hrbeu.edu.cn}
 \affiliation{College of Intelligent Systems Science and Engineering, Harbin Engineering University, Nantong Street, Harbin, 150001, 
Heilongjiang, People's Republic of China}
\address{College of Mathematics Science, Harbin Engineering University, Nantong Street, Harbin, 150001, Heilongjiang, People's Republic of China}

\author{Haitao~Ma}
\email{hmamath@hrbeu.edu.cn}
 \affiliation{College of Mathematics Science, Harbin Engineering University, Nantong Street, Harbin, 150001, Heilongjiang, People's Republic of China}

\author{Peng Xue}
\email{gnep.eux@gmail.com}
 \affiliation{School of Physics, Southeast University, Nanjing, 211189, People's Republic of China}
 \affiliation{Beijing Computational Science Research Center, Beijing, 100193, People's Republic of China}

\author{Kishor~Bharti}
\email{kishor.bharti1@gmail.com}
 \affiliation{A*STAR Quantum Innovation Centre (Q.InC), Institute of High Performance Computing (IHPC), Agency for Science, Technology and Research (A*STAR), 1 Fusionopolis Way, \#16-16 Connexis, Singapore, 138632, Republic of Singapore}
 \affiliation{Centre for Quantum Engineering, Research and Education, TCG CREST, Sector V, Salt Lake, Kolkata 700091, India}

\author{Dax~Enshan~Koh}
\email{dax\_koh@ihpc.a-star.edu.sg}
 \affiliation{A*STAR Quantum Innovation Centre (Q.InC), Institute of High Performance Computing (IHPC), Agency for Science, Technology and Research (A*STAR), 1 Fusionopolis Way, \#16-16 Connexis, Singapore, 138632, Republic of Singapore}

\author{Yunlong~Xiao}
\email{xiao\_yunlong@ihpc.a-star.edu.sg}
 \affiliation{A*STAR Quantum Innovation Centre (Q.InC), Institute of High Performance Computing (IHPC), Agency for Science, Technology and Research (A*STAR), 1 Fusionopolis Way, \#16-16 Connexis, Singapore, 138632, Republic of Singapore}

\date{\today}
\begin{abstract}
Quantum teleportation is the process of transferring quantum information using classical communication and pre-shared entanglement. This process can benefit from the use of catalysts, which are ancillary entangled states that can enhance teleportation without being consumed. While chemical catalysts undergoing deactivation invariably exhibit inferior performance compared to those unaffected by deactivation, quantum catalysts, termed embezzling catalysts, that are subject to deactivation, may surprisingly outperform their non-deactivating counterparts.
In this work, we present teleportation protocols with embezzling catalyst that can achieve arbitrarily high fidelity, namely the teleported state can be made arbitrarily close to the original state, with finite-dimensional embezzling catalysts. We show that some embezzling catalysts are universal, meaning that they can improve the teleportation fidelity for any pre-shared entanglement. We also explore methods to reduce the dimension of catalysts without increasing catalyst consumption, an essential step towards realizing quantum catalysis in practice.
\end{abstract}

\maketitle

\section{Introduction}
Information transmission is essential for advancing technologies and scientific explorations. Entanglement, a distinctive phenomenon of quantum mechanics, revolutionizes communication by providing unparalleled security and efficiency, surpassing classical analogs. A typical example is quantum teleportation~\cite{PhysRevLett.70.1895,Zeilinger1997,PhysRevLett.80.1121}, where sender and receiver use maximally entangled states and classical communication to simulate a noise-free quantum channel. As technology advances, the communication distance of quantum teleportation has increased from about $100$ kilometres using optical fibers~\cite{Juan2012} to $1, 400$ kilometres using a low-Earth-orbit satellite~\cite{Pan2017}, paving the way for a global quantum internet~\cite{Liu2024,Knaut2024}.

In realistic scenarios, noises stemming from imperfections in quantum devices and decoherence in the environment affects the performance of quantum teleportation. This results in the sharing of imperfectly entangled states between parties, lowering the fidelity of the transferred quantum information. To address the challenge, a concept borrowed from chemistry known as a catalyst has been employed, akin to enzymes in biology, leading to catalytic quantum teleportation~\cite{PhysRevLett.127.080502}. In this approach, the catalyst comprises another pair of entangled states that become correlated with the entangled resource shared by the sender and receiver. Importantly, after the communication process, the catalyst remains unchanged. 

Catalytic deactivation~\cite{FORZATTI1999165}, the loss of catalyst activity due to various factors, such as poisoning, fouling, carbon deposition, thermal degradation, and sintering, is unavoidable in most catalytic processes. While typically undesirable, catalytic deactivation, such as poisoning, may sometimes lead to improved catalyst selectivity, e.g., Lindlar catalyst~\cite{carey2007advanced}. This motivates us to ask: What would happen if we intentionally allow catalyst deactivation in quantum information tasks? More specifically, what if, during the process of catalytic quantum teleportation, we permit the catalytic system to undergo slight changes after its interaction with the pre-shared entanglement, known as embezzling catalyst in quantum information theory~\cite{PhysRevA.67.060302,PhysRevLett.111.250404,PhysRevLett.113.150402,PhysRevA.90.042331,doi:10.1073/pnas.1411728112,doi:10.1007/s00037-015-0098-3,Ng_2015,7377103,10.1063/1.4938052,10.1063/1.4974818,PhysRevLett.118.080503, PhysRevLett.121.190504,PhysRevA.100.042323,PhysRevX.11.011061,9568910,doi:10.1038/s41534-022-00608-1,PhysRevX.13.011016,10086536,10121557,Luijk2023covariantcatalysis,PRXQuantum.4.040330,Zanoni2024complete,vanluijk2024embezzling,vanluijk2024embezzlement}?
Could this give us some benefits? Given that certain benefits of conventional catalytic teleportation necessitate an infinite-dimensional catalyst, we may question whether the same performance can be achieved with a finite-dimensional catalyst by allowing embezzling catalyst?

In this work, we address these questions by developing embezzling quantum teleportation. By relaxing the constraint on maintaining the catalyst throughout the process, we demonstrate the feasibility of achieving teleportation with arbitrary precision. Here, the embezzling catalysts possess universality, enabling their application across teleportation tasks without prior knowledge of the shared state between sender and receiver. We conduct a comparative analysis of the dimensional requirements of catalytic systems across different protocols aimed at achieving the same precision. Additionally, we explore strategies to reduce the dimensionality of embezzling catalysts, thereby enhancing the practicality of quantum catalysis. Our work also investigates the fundamental trade-off between the dimensional requirements of embezzling catalysts and their consumption during teleportation. These findings open new avenues for exploring the generality and dimensional limits of quantum catalysis.

\section{Preliminaries}\label{sec:pre}
In this section, we start with a brief discussion of notations and terminologies, setting the stage for the forthcoming exploration. 

\subsection{Mathematical Tools}
Let $\mH_A$ be a Hilbert space and let $\mD(A)=\mD(\mH_A)$ denote the set of quantum states (specifically, density operators) on it. 
To measure how close two quantum states are, we use the {\it Uhlmann fidelity} $F_{U}$~\cite{doi:10.1080/09500349414552171}. For any two states $\rho$ and $\sigma$, $F_{U}(\rho, \sigma)$ is given by
\begin{align}\label{eq:FU}
    F_{U}(\rho, \sigma):=\left[\Tr \left(\sqrt{\sqrt{\sigma }\rho \sqrt{\sigma}}\right)\right]^2.
\end{align}
In particular, when one of the states is pure, the fidelity simplifies to
\begin{align}
F_{U}(\rho, \ketbra{\psi}{\psi})=\Tr[\rho\cdot \ketbra{\psi}{\psi}] = \bra\psi \rho\ket\psi.
\end{align}

The Uhlmann fidelity does not satisfy the properties of a mathematical distance, since it is equal to one for identical states. However, it induces a distance called the purified distance~\cite{tomamichel2013framework} (also called the sine distance~\cite{rastegin2006sine}):
\begin{align}\label{eq:pur-FU}
P(\rho, \sigma)
:= 
\sqrt{1-F_{U} (\rho , \sigma )},\quad  \forall \rho , \sigma\in \mD(A),
\end{align}
which satisfies the following properties
\begin{itemize}
\item Non-negativity: $P(\rho, \sigma)\geqslant 0$ with $P(\rho, \sigma)=0$ when $\rho=\sigma$.
\item Symmetry: $P(\rho, \sigma)=P(\sigma,\rho)$ holds for any quantum state $\rho,\,\sigma\in \mD(A)$.
\item Triangle inequality: $P(\rho , \sigma )\leqslant  P(\rho,\tau )+P(\tau,\sigma)$ holds for any quantum state $\tau\in \mD(A)$.
\item Quantum data processing inequality (DPI): quantum channels can only reduce the purified distance, namely $P(\mN(\rho), \mN(\sigma))\leqslant P(\rho,  \sigma)$ holds for any quantum channel $\mN$~\cite{nielsen1996entanglement}.
\item Tensor invariance: $P(\rho\otimes \tau, \sigma\otimes \tau)=P(\rho,\sigma)$ holds for any quantum state $\tau\in \mD(A)$. This property comes from the multiplicativity of Eq.~\ref{eq:FU}.
\end{itemize}

We end this subsection by introducing the max-relative entropy, a measure of distinguishability between two quantum states $\rho$ and $\sigma$ that satisfy ${\rm supp} (\rho) \subseteq {\rm supp} (\sigma)$.
\begin{align}
D_{\text{max}}(\rho\,\|\,\sigma)
:=
\inf\{\lambda: \rho\leqslant2^{\lambda}\sigma\}.
\end{align}
This quantity can be computed via the following semidefinite program (SDP):
\begin{align}\label{eq:Dmax}
    D_{\text{max}}(\rho\,\|\,\sigma)
    =
    \log_2 \max_{M\geqslant 0}\{\Tr[M\rho]: \Tr[M\sigma]\leqslant 1\}.
\end{align}
Given that SDPs can be efficiently solved using interior point methods~\cite{KHACHIYAN198053,doi:10.1137/1038003,Boyd_Vandenberghe_2004}, the majority of applications involving SDPs~\cite{xiao2019complementary,PhysRevResearch.3.023077,10.1088/978-0-7503-3343-6,PhysRevLett.130.240201,Yuan2023}, such as calculating max-relative entropy, can generally be addressed with high efficiency in practice. 

\subsection{Approximate Catalyst}
Some quantum state transformations from $\rho$ to $\sigma$ are impossible, even approximately, due to the limitations of quantum operations. Exact catalysts can facilitate some of these transformations, but they have stringent conditions: the final state and the catalyst must be uncorrelated, and the catalyst must be preserved. These conditions are often impractical, and they restrict our power to transform $\rho$ to $\sigma$. We can overcome these restrictions by using an approximate catalyst, which has weaker conditions. We present a taxonomy of different kinds of catalysts in quantum information theory in our TABLE~\ref{tab:cat-classification}. An approximate catalyst is formally defined as follows:

\begin{table}[b]
\renewcommand\arraystretch{1.4}
\centering
\begin{tabular}{p{0.18\textwidth}<{\centering\arraybackslash}p{0.11\textwidth}<{\centering\arraybackslash}p{0.07\textwidth}<{\centering\arraybackslash}p{0.07\textwidth}<{\centering\arraybackslash}}
\toprule
\raisebox{-2pt}{\multirow{2}{*}{{\footnotesize Constraints}}} & \raisebox{-2pt}{\multirow{2}{*}{{\footnotesize Exact catalyst}}} & \multicolumn{2}{c}{{\footnotesize Approximate catalyst}} \\ 
\cmidrule{3-4}
            &                & {\footnotesize Correlated}         & {\footnotesize Embezzling}          \\ 
\midrule
{\footnotesize Error on the target state $\sigma_A$}      & $\times$        & \checkmark      & \checkmark      \\ 
{\footnotesize Correlation between systems $A$ and $C$}    & \raisebox{-6pt}{$\times$}        & \raisebox{-6pt}{$\times$}        & \raisebox{-6pt}{\checkmark}      \\ 
{\footnotesize Error on the catalyst $\tau_C$}            & $\times$        & $\times$         & \checkmark     \\ 
\bottomrule
\end{tabular}
 \caption{{\bf The classification of catalysts.} An exact catalyst~\cite{PhysRevLett.83.3566, PhysRevA.64.042314, PhysRevA.71.042319,Turgut_2007, Marvian_2013, doi:10.1038/ncomms7383, PhysRevA.93.042326} is a state that facilitates an otherwise impossible transformation, without undergoing any change itself or becoming correlated with other systems. 
    We can relax the conditions for exact catalysts and define a more general notion of approximate catalysts, as given in Def.~\ref{def:AC}. Approximate catalysts can be further classified into two types: correlated catalysts~\cite{PhysRevLett.126.150502,PhysRevLett.127.150503,PhysRevLett.127.260402,Yadin_2022,PhysRevLett.129.120506,Wilming2022correlationsin,PhysRevA.107.012404,PhysRevLett.130.040401,PhysRevLett.130.240204,ganardi2023catalytic,Datta2024entanglement} and embezzling catalysts~\cite{PhysRevA.67.060302,PhysRevLett.111.250404,PhysRevLett.113.150402,PhysRevA.90.042331,doi:10.1073/pnas.1411728112,doi:10.1007/s00037-015-0098-3,Ng_2015,7377103,10.1063/1.4938052,10.1063/1.4974818,PhysRevLett.118.080503, PhysRevLett.121.190504,PhysRevA.100.042323,PhysRevX.11.011061,9568910,doi:10.1038/s41534-022-00608-1,PhysRevX.13.011016,10086536,10121557,Luijk2023covariantcatalysis,PRXQuantum.4.040330,Zanoni2024complete,vanluijk2024embezzling,vanluijk2024embezzlement}. 
    A correlated catalyst is one in which no errors occur during its use, while an embezzling catalyst is characterized by small errors occurring during its utilization.
    }
    \label{tab:cat-classification}
\end{table}

\begin{figure*}
    \centering
    \includegraphics[width=0.8\textwidth]{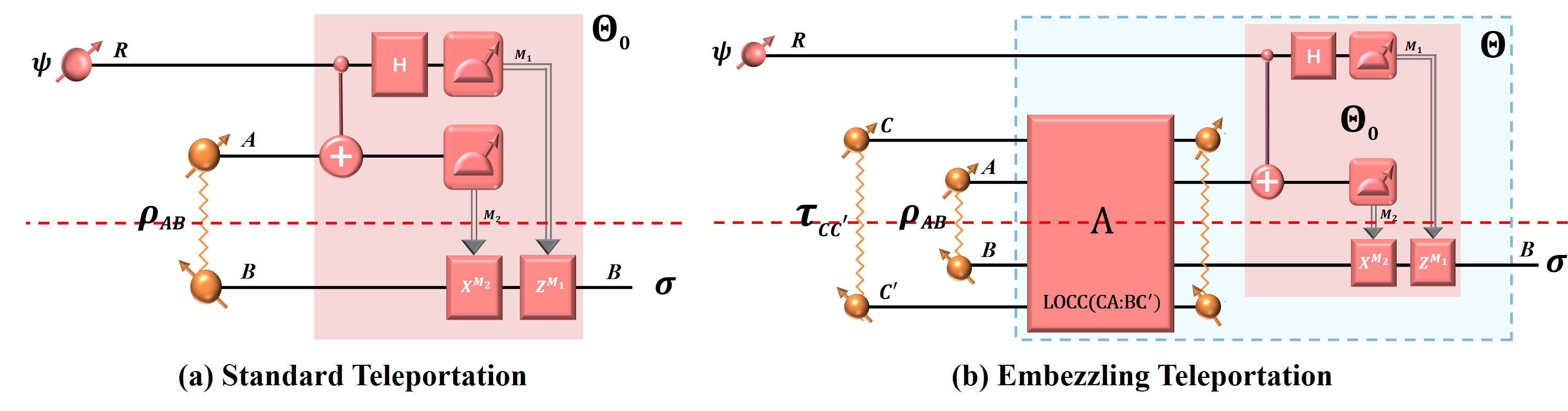}
    \caption{{\bf Quantum teleportation protocols.} (a) Standard teleportation, where the message encoded by $\psi_R$ is transmitted from the sender Alice to receiver Bob. The protocol is enabled by implementing $\Theta_0$, which involves a Bell measurement on $\psi_R\otimes\rho_{AB}$ followed by unitary operations on Bob's system $B$; (b) Teleportation with embezzling catalyst, where Alice and Bob have an additional catalyst $\tau_{CC^{'}}$ undergoing a transformation $\Lambda\in\text{LOCC}(RAC:BC^{'})$ and coming back slightly changed. The state $\tau_{CC^{'}}$ assists in improving the performance of teleportation protocol. Alice holds the part above the dashed line and Bob holds the part below it.}
    \label{fig:teleportation}
\end{figure*}

\begin{definition}[{\bf (Approximate catalysts)}]\label{def:AC}
A state transition from $\rho_A$ to $\sigma_A$ on system $A$ is approximate catalytic with respect to free operations $\mO$ if there is a quantum state $\tau_C$ on catalytic system $C$ and a free operation $\Lambda\in \mO$ such that
\begin{align}\label{eq:D-state}
    D(\Lambda(\rho_A\otimes \tau_C), \sigma_A\otimes\tau_C)
    \leqslant 
    \varepsilon, 
\end{align}
and
\begin{align}\label{eq:D-catalyst}
    D(\Tr_A[\Lambda(\rho_A\otimes \tau_C)], \tau_C)
    \leqslant 
    \delta.
\end{align}
Here $\varepsilon > 0$ and $\delta \geqslant 0$ are the smoothing parameters, with $D$ being a distance measure between states.
We call the state $\tau_C$ an approximate catalyst. 
\end{definition}

Specially, if we set $\varepsilon=0$ and $\delta=0$, these catalysts are referred to as exact catalysts. For example, suppose we have two states $\rho_A$ and $\sigma_A$, and a set of free operations $\mO$. If we cannot transform $\rho_A$ into $\sigma_A$ using only free operations, we write 
\begin{align}
    \rho_A\stackrel{\mO}{\nrightarrow}\sigma_A.
\end{align}
However, if there exists a state $\tau_C$ such that we can transform$ \rho_A\otimes\tau_C$ into $\sigma_A\otimes\tau_C$ using free operations, we write 
\begin{align}
    \rho_A\otimes\tau_C\xrightarrow[]{\mO}\sigma_A\otimes\tau_C.
\end{align}
In this case, $\tau_C$ is an exact catalyst for the transformation from $\rho_A$ to $\sigma_A$.  

By allowing errors for catalysts, we can enhance our ability to transform states and achieve some otherwise impossible transformations. Depending on the value of $\delta$, we use different terms. If $\delta> 0$, we call it an embezzling catalyst, which means that the system gains some advantages at the expense of the $\tau_C$. If $\delta= 0$, we call it a correlated catalyst, which means it can help the transformation while being correlated with the system.
For more details, we refer interested readers to some comprehensive reviews on the topic~\cite{Datta_2023,lipkabartosik2023catalysis}.

\subsection{Quantum Teleportation}
Quantum teleportation is a communication protocol for sending quantum information from Alice to Bob using three ingredients (see Fig.~\ref{fig:teleportation}(a)): (\romannumeral1) An unknown message state $\psi$ of system $R$ that Alice wants to transmit. (\romannumeral2) An entangled state $\rho_{AB}$ of systems $A$ and $B$ shared by Alice and Bob. (\romannumeral3) A local operation and classical communication (LOCC) protocol $\Theta_0\in {\rm LOCC}(RA: B)$ that involves Alice performing a Bell measurement on systems $R$ and $A$, and Bob applying a unitary transformation on system $B$ depending on Alice’s outcome. After applying the LOCC operation $\Theta_0: RAB\to B$, Bob’s system $B$ will be in the state
\begin{align}
\sigma_B:= \Theta_0(\psi_R \otimes \rho_{AB}).
\end{align}
A measure for quantifying the performance of teleportation is the average fidelity~\cite{PhysRevA.60.1888},
\begin{align}\label{eq:average-fidelity}
f(\rho_{AB}):=\int d\psi \bra{\psi}\Theta_0(\psi_R\otimes \rho_{AB})\ket{\psi}.
\end{align}
Here $\psi$ is taken over all pure states with respect to the Haar measure, such that $\int d\psi=1$. Let $d$ denote the dimension of the message system $R$. For any bipartite state $\rho_{AB}$, its average fidelity is in $[1/(d+1),1]$. 
It is $1$ only for maximally entangled states between Alice and Bob. Errors, like device imperfection or environmental decoherence, make teleportation performance less than $1$.

Given a bipartite state $\rho_{AB}$, its entanglement can be quantified through the entanglement fraction~\cite{PhysRevA.60.1888},
\begin{align}\label{eq:fraction}
F(\rho_{AB}):=F_{U}(\rho_{AB}, \phi^+_{AB})=\Tr[\rho_{AB}\cdot\phi^+_{AB}],
\end{align}
where $\phi^+_{AB}:= \ketbra{\phi^+}_{AB}$ stands for the maximally entangled state acting on systems $AB$ with
\begin{align}
    \ket{\phi^+}_{AB}
    :=
    \frac{1}{\sqrt{d}}
    \sum^d_{i=1}\ket{ii}_{AB}.
\end{align}
Using state $\rho_{AB}$ as the shared resource in quantum teleportation, its average fidelity and entanglement fraction are connected by the following formula
\begin{align}\label{eq:teleportation-fidelity}
f(\rho_{AB})=\frac{F(\rho_{AB})d+1}{d+1},
\end{align}
which shows that the capability of $\rho_{AB}$ in quantum communication is completely determined by its entanglement~\cite{xing2023fundamental}. 


\section{Teleportation with Embezzling Catalysts}\label{sec:teleportation}

Catalysts are essential for many quantum information tasks, such as resource distillation and communication (see TABLE~
\ref{tab:cat-application}), enabling performance that would otherwise be unattainable in their absence. However, embezzling catalysts for quantum teleportation are less explored than correlated ones. We fill this gap by showing that embezzling catalysts can achieve teleportation with any desired accuracy, even with noisy initial states. We also address the practical issue of high-dimensional catalysts and propose ways to lower their dimension without compromising their performance.

\subsection{Main Results}\label{subsection:mr}

Teleportation with embezzling catalysts, like the protocol with correlated catalysts introduced in Ref.~\cite{PhysRevLett.127.080502}, has two steps. First, a LOCC operation $\Lambda$ acts on $\rho_{AB}$ and $\tau_{CC^{'}}$, and correlates them, where Alice has systems $AC$ and Bob has systems $BC^{'}$. Second, the standard teleportation procedure $\Theta_0$ (see Fig.~\ref{fig:teleportation}(a)) is applied to systems $RAB$. The overall operation is $\Theta$ (see Fig.~\ref{fig:teleportation}(b)), and is given by
\begin{align}
\Theta:= \Theta_0\circ\Lambda.
\end{align}
This is a map from the systems $RACBC^{'}$ to the system $B$. Here we do not require the catalyst to be returned exactly as its original form, that is, we allow
\begin{align}
    P\left(
    \Tr_{AB}[\Lambda(\rho_{AB}\otimes\tau_{CC^{'}})], \tau_{CC^{'}}
    \right)
    >0,
\end{align}
where $P$ is the purified distance, defined in Eq.~\ref{eq:pur-FU}. After the teleportation with embezzling catalysts, the resultant state on Bob’s system is
\begin{align}\label{eq:sigmab}
\sigma_B:= \Theta(\psi_R \otimes \rho_{AB}\otimes \tau_{CC^{'}}).
\end{align}
In this scenario, the efficacy of teleportation with embezzling catalyst is measured through the average fidelity $f_c$ which is defined as 
\begin{align}\label{eq:fidelity-cat}
f_c(\rho_{AB})
=
\int d \psi 
\bra{\psi}
\Theta(\psi_R\otimes \rho_{AB}\otimes \tau_{CC^{'}})
\ket{\psi},
\end{align}
and our primary findings indicate that

\begin{theorem}\label{thm:teleportation}
Given any bipartite state $\rho_{AB}$ and any positive number $\epsilon>0$, we can find a finite dimensional embezzling catalyst $\tau_{CC^{'}}$ and a LOCC operation $\Theta \in \rm{LOCC}(RAC:BC^{'})$ (see Fig.~\ref{fig:teleportation}(b)) such that
\begin{align}\label{eq:fidelity-epsilon}
f_c(\rho_{AB})\geqslant  1-\epsilon.
\end{align}
\end{theorem}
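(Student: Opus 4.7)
The plan is to reduce the theorem to the embezzlement construction of van Dam and Hayden. The key observation is that Eq.~\ref{eq:teleportation-fidelity} shows standard teleportation with a maximally entangled state $\phi^+_{AB}$ achieves average fidelity exactly $1$. So it suffices to design an LOCC map $\Lambda \in \mathrm{LOCC}(AC:BC^{'})$ that, with the help of the catalyst, transforms $\rho_{AB}$ into a state arbitrarily close to $\phi^+_{AB}$; taking $\Theta := \Theta_0 \circ \Lambda$ then yields teleportation of arbitrarily high fidelity and preserves the required LOCC structure (Alice holds $RAC$, Bob holds $BC^{'}$).

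For the catalyst I would take $\tau_{CC^{'}} = \ket{\mu_n}\bra{\mu_n}_{CC^{'}}$ with the van Dam--Hayden state
\[
\ket{\mu_n}_{CC^{'}} := \frac{1}{\sqrt{H_n}} \sum_{j=1}^n \frac{1}{\sqrt{j}} \ket{j}_C \ket{j}_{C^{'}}, \qquad H_n := \sum_{j=1}^n \frac{1}{j}.
\]
Its defining property is that for any pure bipartite target $\ket{\phi}_{AB}$ of Schmidt rank $d$, there exist local unitaries $U_{AC}$ and $V_{BC^{'}}$ such that $(U_{AC}\otimes V_{BC^{'}})(\ket{00}_{AB}\otimes\ket{\mu_n}_{CC^{'}})$ lies within purified distance $\delta(n,d)$ of $\ket{\phi}_{AB}\otimes\ket{\mu_n}_{CC^{'}}$, with $\delta(n,d)\to 0$ as $n\to\infty$ for fixed $d$. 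The LOCC operation $\Lambda$ is then immediate: Alice and Bob each discard their share of $\rho_{AB}$, re-prepare $\ket{0}$, and locally apply $U_{AC}$, $V_{BC^{'}}$ for the target $\phi = \phi^+$. The output $\Lambda(\rho_{AB}\otimes\mu_{CC^{'}})$ then lies within purified distance $\delta(n,d)$ of $\phi^+_{AB}\otimes\mu_{CC^{'}}$, and by tracing out $AB$ one also sees that the catalyst marginal is only $\delta(n,d)$-perturbed, confirming that $\tau_{CC^{'}}$ is a genuine embezzling catalyst in the sense of Def.~\ref{def:AC}.

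For the fidelity bound I would use DPI and tensor invariance of the purified distance. Since $\Theta_0$ (tensored with the identity on $CC^{'}$) followed by tracing out $CC^{'}$ is a quantum channel, DPI combined with $\Theta_0(\psi_R\otimes\phi^+_{AB}) = \psi_B$ gives
\[
P(\sigma_B, \psi_B) \leqslant P\big(\Lambda(\rho_{AB}\otimes\mu_{CC^{'}}),\, \phi^+_{AB}\otimes\mu_{CC^{'}}\big) \leqslant \delta(n,d).
\]
The pure-state fidelity identity then yields $\bra{\psi}\sigma_B\ket{\psi}\geqslant 1-\delta(n,d)^2$ uniformly in $\psi$, so averaging over the Haar measure gives $f_c(\rho_{AB})\geqslant 1-\delta(n,d)^2$. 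Choosing $n$ large enough that $\delta(n,d)^2\leqslant\epsilon$ delivers Eq.~\ref{eq:fidelity-epsilon}, and the catalyst dimension $n^2$ remains finite.

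The main obstacle is invoking (or independently reproving) a quantitative van Dam--Hayden embezzlement bound with an explicit scaling -- typically $\delta(n,d)=O\!\left(\sqrt{\log d/\log n}\right)$ -- that both guarantees convergence and yields a concrete choice of $n$ in terms of $\epsilon$ and the dimension $d$ of the message system. Everything else -- the LOCC implementability (the embezzling step is literally a local unitary plus local state preparation), the DPI chain, and the reduction to the pure-state fidelity formula -- is routine once this embezzlement input is in hand.
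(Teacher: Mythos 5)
Your proposal is correct and is essentially the paper's second proof of Theorem~\ref{thm:teleportation}: Lemma~\ref{lem:emb-state} constructs exactly your catalyst $\ket{\mu_n}$ (there denoted $\ket{\tau^E}$ with Schmidt rank $M$) and exactly your LOCC map (discard the $AB$ share, re-prepare $\ket{11}$, apply local unitaries $U_{AC}\otimes U_{BC'}$), and it supplies the quantitative embezzlement bound $F_U \geqslant \left((\log M - \log d)/\log M\right)^2$ that you defer as the ``main obstacle,'' after which Theorem~\ref{thm:E} concludes via the entanglement-fraction formula of Eq.~\ref{eq:teleportation-fidelity} rather than your per-$\psi$ DPI-plus-Haar-average argument (a cosmetic difference that only changes the constant in the choice of $M$). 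Note the paper also gives an independent first proof via the convex-split lemma (Theorem~\ref{thm:CS}), which your proposal does not address.
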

Our theorem shows that with the help of an embezzling catalyst, we can simulate the noiseless channel $\id_{A\rightarrow B}$ with arbitrary precision. The key idea behind the embezzling catalytic protocol is that we can use the catalyst to increase the entanglement of any initial state $\rho_{AB}$ until it is arbitrarily close to 
\begin{table}[H]
\renewcommand\arraystretch{1.4}
\centering
\begin{tabular}{p{0.16\textwidth}<{\centering\arraybackslash}p{0.12\textwidth}<{\centering\arraybackslash}p{0.08\textwidth}<{\centering\arraybackslash}p{0.08\textwidth}<{\centering\arraybackslash}}
\toprule
\raisebox{-9pt}{\multirow{2}{*}{{\footnotesize Quantum tasks}}} & \raisebox{-9pt}{\multirow{2}{*}{{\footnotesize Correlated Catalyst}}} & \multicolumn{2}{c}{{\footnotesize Embezzling catalyst}} \\ 
\cmidrule{3-4}
            &                & {\footnotesize Convex-split lemma}         & {\footnotesize Embezzling states}          \\ 
\midrule
{\footnotesize Coherence distillation} & \cite{PhysRevA.107.012404} & \cite{PhysRevA.100.042323} & \cite{PhysRevA.100.042323} \\
{\footnotesize  Entanglement distillation} & \cite{PhysRevLett.127.150503} & \textcolor{red}{?} & \cite{PhysRevA.67.060302} \\
{\footnotesize Quantum teleportation} & \cite{PhysRevLett.127.080502}  & \textcolor{red}{?} & \textcolor{red}{?} \\
{\footnotesize Quantum communication}  & \cite{Datta2024entanglement}  & \textcolor{blue}{?} & \textcolor{blue}{?} \\
{\footnotesize Superdense coding}  & ?  &  \textcolor{blue}{$\times$}  & \textcolor{blue}{?} \\
\bottomrule
\end{tabular}
\caption{{\bf Application scope of catalysts.} The red question marks indicate the tasks that will be explored and addressed in this work, while the blue question marks represent tasks slated for investigation and resolution in our upcoming work~\cite{CP}. The cross indicates that the convex-split-lemma-assisted protocol does not work for superdense coding, while the black question mark signifies that it is still unknown whether correlated catalysts exist for superdense coding.}
\label{tab:cat-application}
\end{table}
a maximally entangled state $\phi^+_{AB}$. We will prove this statement in two different ways: one based on the convex-split lemma introduced in Ref.~\cite{PhysRevLett.119.120506}, and another based on the entanglement embezzling states constructed in Ref.~\cite{PhysRevA.67.060302}. In addition, we will explore the possibility of reducing the dimension of the catalysts. 
 
\subsection{Convex-Split-Lemma-Assisted Teleportation}\label{subsection:convex-split}

Originally proposed in Ref.~\cite{PhysRevLett.119.120506} to study communication cost, the convex-split lemma has become a powerful tool with many applications. It is useful in various domains, such as catalytic decoupling~\cite{PhysRevLett.118.080503}, quantum resource theories~\cite{PhysRevX.11.011061,PhysRevA.100.042323},  and one-shot quantum communication~\cite{doi:10.1038/s41534-022-00608-1}. Before we review its statement, we introduce some notations. We write $t$ for the system $A_tB_t$, and omit the subscripts of $\rho_{AB}$ and $\sigma_{AB}$, denoting them as $\rho$ and $\sigma$. The context will make it clear which systems they refer to. Now we are ready to state the convex-split lemma.
\begin{lemma}
[{\bf (Convex-Split Lemma~\cite{PhysRevLett.119.120506})}]
\label{lem:convex-split}
Given quantum states $\rho, \tau \in \mD(AB)$ with $k:= D_{\rm{max}}(\rho\parallel\tau)$, then there exists a LOCC operation $\Lambda^{CS}$ (see Fig.~\ref{fig:convex-split}) such that
\begin{align}\label{eq:P-CS}
P(\Lambda^{CS}(\rho_{1}\otimes 
    \tau_{2} 
    \otimes \cdots \otimes 
    \tau_{n}), \tau^{\otimes n}) \leqslant \sqrt{\frac{2^k}{n}}.
\end{align}
Here $P$ is the purified distance (see Eq.~\ref{eq:pur-FU}). The operation $\Lambda^{CS}$ proceeds in two steps: (\romannumeral1) Alice randomly generates an integer $t$ from $\{1, \ldots, n\}$ with uniform probability $1/n$, and communicates this number to Bob; (\romannumeral2) Using  this classical information, Alice and Bob perform a SWAP operation $\mE_{1t}$  between $1:= A_1B_1$ and $t:= A_tB_t$.
Mathematically, $\Lambda^{CS}$ is defined as follows 
\begin{align}
    \Lambda^{CS}
    (\rho\otimes \tau^{\otimes n-1})
    :=
    &\frac{1}{n}
    \sum^n_{t=1}
    \mE_{1t}
    (\rho_{1}\otimes 
    \tau_{2} 
    \otimes \cdots \otimes 
    \tau_{n})
    \label{eq:Lambda-cs}\\
    =
    &\frac{1}{n}\sum^n_{t=1}
    \tau_{1}
    \otimes \cdots \otimes
    \rho_{t} 
    \otimes \cdots \otimes 
    \tau_{n} \label{eq:Lambda-cs-state}.
\end{align}
\end{lemma}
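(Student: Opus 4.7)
The plan is to translate the target into a lower bound on the Uhlmann fidelity. By Eq.~\ref{eq:pur-FU}, the claim $P(\omega,\tau^{\otimes n}) \leqslant \sqrt{2^k/n}$ is equivalent to $F_U(\omega,\tau^{\otimes n}) \geqslant 1 - 2^k/n$, where by Eq.~\ref{eq:Lambda-cs-state} I write $\omega := \Lambda^{CS}(\rho_1\otimes\tau_2\otimes\cdots\otimes\tau_n) = \tfrac{1}{n}\sum_{t=1}^n \omega_t$ with $\omega_t := \tau^{\otimes(t-1)}\otimes\rho\otimes\tau^{\otimes(n-t)}$.

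The structural input is $k=D_{\max}(\rho\|\tau)$, which by Eq.~\ref{eq:Dmax} means $\rho\leqslant 2^k\tau$. Setting aside the trivial case $\rho=\tau$ (which forces $k=0$), this produces a convex decomposition
\begin{align}
\tau = 2^{-k}\rho + (1-2^{-k})\sigma
\end{align}
for some state $\sigma\in\mD(AB)$. First I would lift this coherently to a purification
\begin{align}
|\tau\rangle_{ABRF} = \sqrt{2^{-k}}\,|\rho\rangle_{ABR}|0\rangle_F + \sqrt{1-2^{-k}}\,|\sigma\rangle_{ABR}|1\rangle_F,
\end{align}
on $AB$ together with a reference $R$ and an auxiliary flag qubit $F$ that records which branch of the mixture was taken; my plan uses this explicit purification throughout.

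Next I would invoke Uhlmann's theorem: $\sqrt{F_U(\omega,\tau^{\otimes n})}$ equals the supremum of $|\langle\Omega|T\rangle|$ over joint purifications, so it suffices to construct one favourable pair and bound the overlap from below. I would purify $\omega$ by introducing an index register $G\simeq\mathbb{C}^n$ that coherently records the substituted position, giving $|\Omega\rangle = \tfrac{1}{\sqrt n}\sum_t |t\rangle_G \otimes \bigotimes_{s\neq t}|\tau\rangle_s \otimes |\rho\rangle_t|0\rangle_{F_t}$; a corresponding purification of $\tau^{\otimes n}$ would be $|\tau\rangle^{\otimes n}$ tensored with an auxiliary state on $G$ that is subsequently optimised by a controlled unitary on the purifying system. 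Expanding $|\tau\rangle^{\otimes n}$ in the product flag basis $\{|\mathbf{b}\rangle : \mathbf{b}\in\{0,1\}^n\}$ causes the inner product to collapse into a sum over binary strings weighted by Bernoulli$(2^{-k})$ amplitudes, in which the substitution index $t$ survives only when $b_t=0$.

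The main obstacle will be the final combinatorial estimate. After choosing the auxiliary unitary so that the $G$-amplitudes align constructively, the overlap reduces to a quantity of the form $\mathbb{E}_{W\sim\mathrm{Bin}(n,\,2^{-k})}[\sqrt{W}]/\sqrt{n\cdot 2^{-k}}$, and the target bound then follows by controlling how close $\mathbb{E}[\sqrt{W}]$ is to $\sqrt{\mathbb{E}[W]}$ via a second-moment/variance estimate. As a cross-check route, one may instead bound the relative entropy $D(\omega\|\tau^{\otimes n})$ by joint convexity and convert back to purified distance through Fuchs--van de Graaf combined with Pinsker; this recovers the $1/\sqrt n$ scaling in line with the original argument of Ref.~\cite{PhysRevLett.119.120506} but costs constants compared with the direct Uhlmann route outlined above.
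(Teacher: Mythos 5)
The paper does not actually prove this lemma --- it is imported verbatim from Ref.~\cite{PhysRevLett.119.120506} --- so there is no in-paper argument to compare against; your proposal is therefore evaluated as a self-contained proof, and the main route you describe is correct and completable. It is the fidelity-based proof (coherent flag purification plus Uhlmann), as opposed to the original relative-entropy argument, which first shows $D(\Lambda^{CS}(\rho\otimes\tau^{\otimes n-1})\,\|\,\tau^{\otimes n})\leqslant\log(1+2^k/n)$ and then converts via $F_U\geqslant 2^{-D}$. Your construction checks out at every step: the flag purification of $\tau$ is legitimate because the cross terms are killed by $\langle 0|1\rangle_F=0$; the index-register state $\ket{\Omega}$ does reduce to $\Lambda^{CS}(\rho\otimes\tau^{\otimes n-1})$ since the $\ket{t}_G$ are orthogonal; and a flag-controlled unitary on the purifying registers that rotates $G$ onto the normalized version of $\sum_{t:b_t=0}\ket{t}$ yields exactly the overlap $\mathbb{E}[\sqrt W]/\sqrt{n2^{-k}}$ with $W\sim\mathrm{Bin}(n,2^{-k})$. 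The step you flag as the main obstacle closes cleanly, and a variance estimate is not even needed: two applications of Cauchy--Schwarz give $\mathbb{E}[\sqrt W]\geqslant(\mathbb{E}W)^2/\mathbb{E}[W^{3/2}]\geqslant(\mathbb{E}W)^2/\sqrt{\mathbb{E}[W]\,\mathbb{E}[W^2]}=np/\sqrt{np+1-p}$ with $p:=2^{-k}$, whence $F_U\geqslant(\mathbb{E}\sqrt W)^2/(np)\geqslant 1-(1-p)/(np)\geqslant 1-2^k/n$, which is Eq.~\ref{eq:P-CS}. Two caveats. First, your ``cross-check route'' does not work as stated: joint convexity alone gives only $D(\omega\|\tau^{\otimes n})\leqslant D(\rho\|\tau)$ with no $n$-dependence (the relative-entropy proof needs the compensation identity, not mere convexity), and converting through Fuchs--van de Graaf plus Pinsker turns $D=O(2^k/n)$ into $P=O((2^k/n)^{1/4})$, losing the claimed $1/\sqrt{n}$ scaling; the correct conversion is $F_U\geqslant 2^{-D}$ from monotonicity of R\'enyi divergences. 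Second, for completeness handle the degenerate cases $k=0$ (then $\rho=\tau$ and there is nothing to prove) and $k=\infty$ (the bound is vacuous), and note that the LOCC claim for $\Lambda^{CS}$ is immediate from its definition as shared classical randomness plus local swaps.
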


By applying the convex-split lemma, we derive the following result for teleportation with embezzling catalysts, which constitutes the first proof of Thm.~\ref{thm:teleportation}.

\begin{figure}[t]
\centering
	\includegraphics[width=0.48\textwidth]{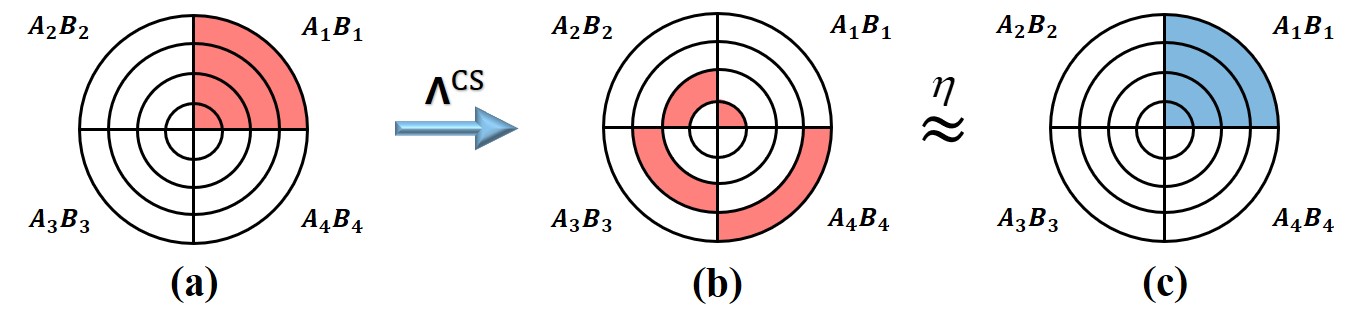}
\caption{
\textbf{The LOCC operation $\Lambda^{CS}$ defined in Eq.~\ref{eq:Lambda-cs}.}
We use a visual example with $n=4$ to illustrate the convex-split lemma. Each color represents a different state: red for $\rho$, white for $\tau$, and blue for the maximally entangled state $\phi^+$. The circle portrays a uniform mixture of $4$ concentric circles. Each concentric circle, from the innermost to the outermost, is designated as the $t$-th layer, with $t$ ranging from $1$ to $4$, and represents a quantum state. For instance, every concentric circle in (a) stands for a tensor product state of the form $\rho_{1}\otimes \tau_{2} \otimes \tau_{3} \otimes \tau_{4}$, where $i\in{1, \ldots, 4}$ denotes systems $A_iB_i$. In (b), we take layer $2$ as an illustration, depicting $\tau_{1}\otimes \rho_{2} \otimes \tau_{3} \otimes \tau_{4}$. Finally, (c) showcases the state $\phi^+_{1}\otimes \tau_{2} \otimes \tau_{3} \otimes \tau_{4}$. The symbol $\approx$ indicates that the purified distance between the quantum states depicted in (b) and (c) is at most $\eta=\sqrt{2^k/4} + P(\tau, \phi^+)$.}
\label{fig:convex-split}
\end{figure}
\begin{theorem}
[{\bf (Convex-Split-Lemma-Assisted Teleportation)}]
\label{thm:CS}
Consider a bipartite quantum state $\rho$ on systems $AB$. Let $\tau$ be another bipartite state on the same space such that the support of $\rho$ is contained in the support of $\tau$, i.e., ${\rm supp}(\rho)\subseteq {\rm supp}(\tau)$. Define $k$ as the max-relative entropy of $\rho$ with respect to $\tau$, that is, $k:= D_{\rm{max}}(\rho\parallel\tau)$. Then, for any positive number $\epsilon>0$, there exists a catalyst 
\begin{align}\label{eq:cs-state}
  \tau^{CS}:= \tau^{\otimes n-1} 
\end{align}
with integer $n$ defined as 
\begin{align}\label{eq:n-exist-2}
    n= \left\lceil \frac{2^{k+2}d}{\epsilon(d+1)} \right\rceil.
\end{align}
Here $d$ stands for the dimension of message system $R$ (see Fig.~\ref{fig:teleportation}), namely $d:= \dim \mH_R$. Suppose that the entanglement fraction of $\tau$ is lower-bounded by
\begin{align}\label{eq:n-exist}
   F(\tau)\geqslant 1-\frac{\epsilon(d+1)}{4d}.
\end{align}
Then, by using the catalyst state $\tau^{CS}$ defined in Eq.~\ref{eq:cs-state} and the teleportation protocol $\Theta_0\circ\Lambda^{CS}$ described in Eq.~\ref{eq:Lambda-cs}, we can achieve the following performance of catalytic teleportation
\begin{align}\label{eq:fidelity-CS}
f_c(\rho)\geqslant  1-\epsilon.
\end{align} 
\end{theorem}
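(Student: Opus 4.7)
The plan is to reduce the average fidelity $f_c(\rho)$ of the catalytic protocol to the entanglement fraction of the post-$\Lambda^{CS}$ reduced state on the $A_1 B_1$ register, and then to use the convex-split lemma together with the near-maximal entanglement of $\tau$ to show that this reduced state is itself nearly maximally entangled. To set this up I would first note that $\Theta = \Theta_0 \circ \Lambda^{CS}$ and that $\Theta_0$ acts only on $R A_1 B_1$, so defining
\begin{align}
\omega := \Tr_{A_2 B_2 \cdots A_n B_n}\!\left[\Lambda^{CS}(\rho \otimes \tau^{\otimes n-1})\right],
\end{align}
one has $f_c(\rho) = f(\omega)$. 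By Eq.~\ref{eq:teleportation-fidelity} it therefore suffices to establish $F(\omega) \geqslant 1 - \epsilon(d+1)/d$.

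The core step is to control $P(\omega, \phi^+)$ via the triangle inequality
\begin{align}
P(\omega, \phi^+) \leqslant P(\omega, \tau) + P(\tau, \phi^+).
\end{align}
For the first term I would invoke Lem.~\ref{lem:convex-split}, which bounds $P(\Lambda^{CS}(\rho \otimes \tau^{\otimes n-1}), \tau^{\otimes n})$ by $\sqrt{2^k/n}$, and then apply the DPI of the purified distance under the partial trace over $A_2 B_2 \cdots A_n B_n$ to transfer this bound to $P(\omega, \tau)$. For the second term I would use $P(\tau, \phi^+) = \sqrt{1 - F(\tau)}$, which is immediate from Eq.~\ref{eq:pur-FU} and the purity of $\phi^+$. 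The choices of $n$ in Eq.~\ref{eq:n-exist-2} and the threshold on $F(\tau)$ in Eq.~\ref{eq:n-exist} are tailored to make each contribution at most $\sqrt{\epsilon(d+1)/(4d)}$, so that their sum is at most $\sqrt{\epsilon(d+1)/d}$ and squaring yields $F(\omega) \geqslant 1 - \epsilon(d+1)/d$. A short algebraic simplification through Eq.~\ref{eq:teleportation-fidelity} then delivers $f_c(\rho) \geqslant 1 - \epsilon$.

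The one subtlety requiring care is balancing the two error sources: the convex-split error is paid for by the catalyst copy count $n$, while the residual entanglement deficit $1 - F(\tau)$ is paid for by the quality of each catalyst copy, and both must scale as $\epsilon/d$ up to absolute constants so that, after summing, squaring, and threading through Eq.~\ref{eq:teleportation-fidelity}, the fidelity deficit matches exactly $\epsilon$ rather than only $O(\epsilon)$. Beyond this balancing act, the argument is a clean assembly of the purified-distance triangle inequality, DPI, and the linear fidelity–entanglement-fraction relation of Eq.~\ref{eq:teleportation-fidelity}, with no novel technical ingredients needed.
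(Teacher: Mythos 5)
Your proposal is correct and follows essentially the same route as the paper's proof: invoke the convex-split lemma with $n=\lceil 2^{k+2}d/(\epsilon(d+1))\rceil$, combine it with $P(\tau,\phi^+)=\sqrt{1-F(\tau)}\leqslant\sqrt{\epsilon(d+1)/(4d)}$ via the triangle inequality and the data processing inequality to get $P(\omega,\phi^+)\leqslant\sqrt{\epsilon(d+1)/d}$, and then convert the resulting entanglement fraction bound into $f_c(\rho)\geqslant 1-\epsilon$ through Eq.~\ref{eq:teleportation-fidelity}. The only cosmetic difference is that you apply the partial trace (DPI) before the triangle inequality while the paper does the reverse; the bounds obtained are identical.
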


Remark that the embezzling catalyst $\tau^{CS}$ is a quantum state on the systems $CC^{'}$, where $C:= A_2,\ldots,A_n$ and $C^{'}:= B_2,\ldots,B_n$.

\begin{proof}
Let us first construct a set $\mS$ that will be useful for choosing catalysts. Its rigorous construction is defined below, and we also provide an illustration that describes our main idea of the construction in Fig.~\ref{fig:tau-set}(a).
\begin{align}\label{eq:S-p-zeta}
    \mS
    :=
    \{
    p\phi^++(1-p)\zeta
    \,|\,
    \zeta\in \mD(AB)_{>0},\ p\in[0,1)\},
\end{align}
where $\mD(AB)_{>0}$ represents the set of positive states. It is worth highlighting that for any $\tau\in\mS$, we will have $\text{supp}(\tau)=\mH_{AB}$, and hence satisfies $\text{supp}(\rho)\subseteq \text{supp}(\tau)$. By further choosing $p$ in Eq.~\ref{eq:S-p-zeta} such that 
\begin{align}\label{eq:tau-p}
   p
   \geqslant 
   1-\frac{\epsilon(d+1)}{4d(1-F(\zeta))},
\end{align}
the entanglement fraction $F$ of $\tau$ will meet Eq.~\ref{eq:n-exist}. Defining $\tau^{CS}=\tau^{\otimes n-1}=\tau_2\otimes \cdots \otimes \tau_n$ with $n:= \lceil 2^{k+2}d/\epsilon(d+1) \rceil$, Lem.~\ref{lem:convex-split} then implies that
\begin{align}\label{eq:rho-tau}
P(\Lambda^{CS}(\rho\otimes \tau^{CS}),  \tau\otimes\tau^{CS}) \leqslant \sqrt{\frac{2^k}{n}}.
\end{align}
With the help of the triangle inequality, we obtain that the purified distance between $\Lambda^{CS}
(\rho\otimes\tau^{CS})$ and $\phi^+\otimes \tau^{CS}$ is bounded by
\begin{align}\label{eq:upperbound-P}
P(\Lambda^{CS}
(\rho\otimes\tau^{CS}), \phi^+\otimes \tau^{CS})
\leqslant 
\sqrt{\frac{\epsilon(d+1)}{d}}.
\end{align}
Now, let us come back to the result state on the systems $AB$, which is denoted by $\rho^{(n)}$ and formally expressed as
\begin{align}\label{eq:target-state-cs}
   \rho^{(n)}
   :=
   \Tr_{CC^{'}}\left[\Lambda^{CS}(\rho\otimes \tau^{CS})\right]
   =
   \frac{1}{n} \rho+\frac{n-1}{n}\tau.
\end{align}
Consequently, the entanglement fraction $F(\cdot)$ (see Eq.~\ref{eq:fraction}) of $\rho^{(n)}$ is lower-bounded by
\begin{align}
    F(\rho^{(n)})=1- P^2(\rho^{(n)}, \phi^+)\geqslant 1-\frac{\epsilon(d+1)}{d}.     
\end{align}
Thanks to Eq.~\ref{eq:fidelity-cat}, we can guarantee that a lower bound of the average fidelity is $1-\epsilon$; that is
\begin{align}\label{eq:fidelity-convex}
f_c(\rho)=\frac{F(\rho^{(n)})d+1}{d+1}\geqslant 1-\epsilon,
\end{align}
which completes the proof.
\end{proof}

\begin{figure*}
    \centering
    \includegraphics[width=1\textwidth]{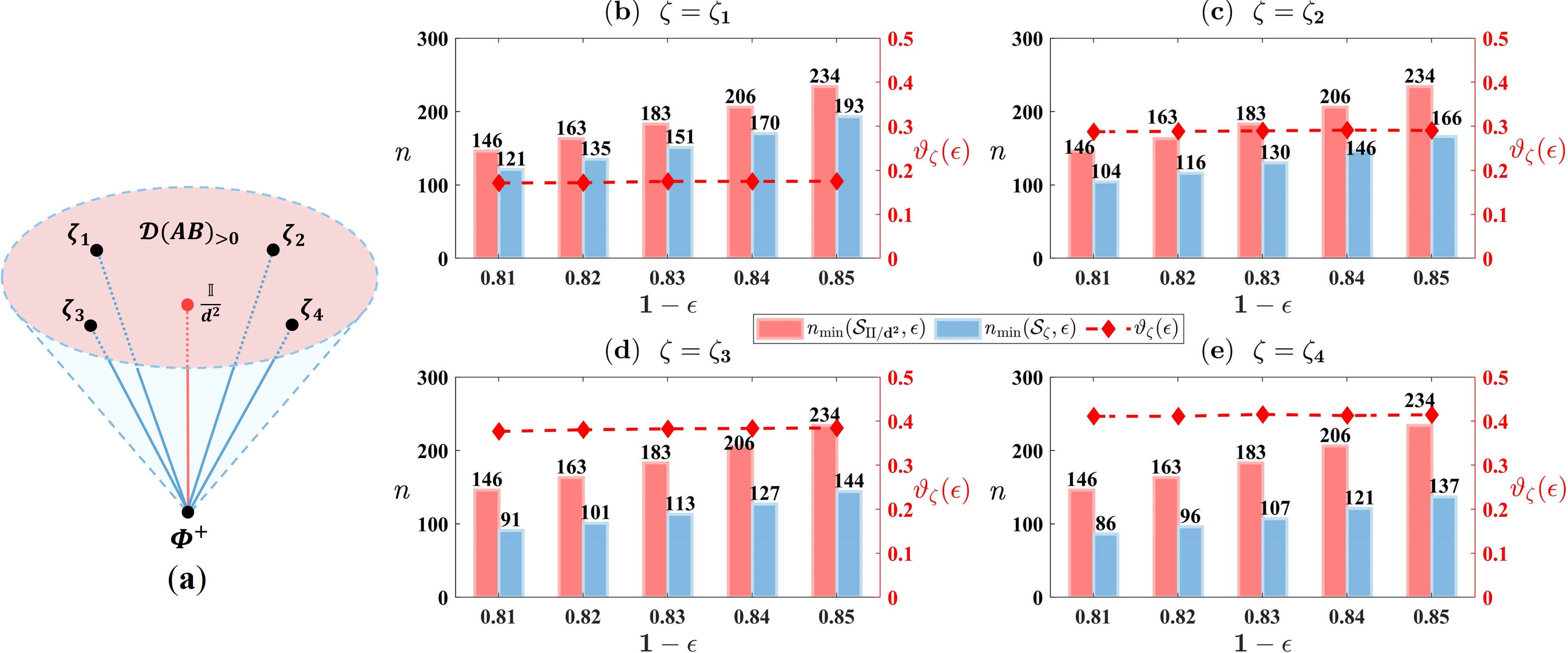}
     \caption{\textbf{The comparison of $n_{\text{min}}(\mS_{\mI/d^2}, \epsilon)$ and $n_{\text{min}}(\mS_{\zeta_i}, \epsilon)$.} 
     In (a), we sketch our idea and visualize the relation between the maximally mixed state $\mI/d^2$ and the four random full-ranked states $\zeta_i$ with $i\in\{1, 2, 3, 4\}$. Each point of the line represents a possible $\tau$ considered in Eq.~\eqref{eq:cs-state}. In (b) to (e), the red and blue bar graphs respectively represent the number of copies of $\tau$, with the red dotted line indicating the percentage of copy reduction, i.e., $\vartheta(\epsilon)$, defined in Eq.~\eqref{eq:ratio}.
    }
    \label{fig:tau-set}
\end{figure*}

In addition to achieving high performance in quantum teleportation with the help of $\tau^{CS}$, we are also interested in minimizing the consumption of $\tau^{CS}$ during the process, as significant changes are undesirable. Specifically, in terms of purified distance, the change in the embezzling catalyst is upper bounded by
\begin{align}\label{eq:consumption-convex}
        P\left(
           \Tr_{AB}[\Lambda^{CS}(\rho\otimes\tau^{CS})], \tau^{CS}
         \right)
       \leqslant \sqrt{\frac{2^k}{n}}.
\end{align}
This inequality follows directly from Eq.~\ref{eq:rho-tau} and the quantum data processing inequality. Consequently, this inequality indicates that when more copies of $\tau$ are used in the construction of $\tau^{CS}$ (see Eq.~\ref{eq:cs-state}), the overall consumption decreases, making the catalytic systems closer to their original form. 

The performance of quantum teleportation systems varies, necessitating different error tolerances for different platforms~\cite{Pirandola2015Advances,Hu2023Progress} . Our focus is on determining the minimal dimensional requirements to ensure that the variation in the embezzling catalyst remains within an error $\delta$. Specifically, for protocols assisted by the convex-split lemma, this issue can be framed in terms of the number of copies of the state $\tau$ required to construct $\tau^{CS}$ (see Eq.~\ref{eq:cs-state}). According to Eq.~\ref{eq:consumption-convex}, the minimum number of copies, in terms of $n$, is determined by
\begin{align}
    n\geqslant \frac{2^k}{\delta^2}.
\end{align}

By using the insights from the convex-split lemma, it becomes evident that in the quest to fabricate a catalyst facilitating quantum teleportation, the key lies in identifying quantum states whose support encompasses that of the initial state $\rho_{AB}$. We denote the collection of all such states as
\begin{align}\label{eq:S-supp}
       \mS_{{\rm supp}}:=\{\tau\in \mD(AB)\,|\, {\rm supp} (\rho) \subseteq {\rm supp} (\tau)\},
\end{align}
and it is clear that $\mS\subset\mS_{{\rm supp}}$. The demonstration of Thm.~\ref{thm:CS} establishes the existence of $n$ (see Eq.~\ref{eq:n-exist-2}), enabling  teleportation with embezzling catalysts. However, from a practical standpoint, our emphasis should be on optimizing the protocol to achieve equivalent performance in quantum teleportation with reduced dimensionality in the catalytic system.

\begin{figure}
    \centering
    \includegraphics[width=0.48\textwidth]{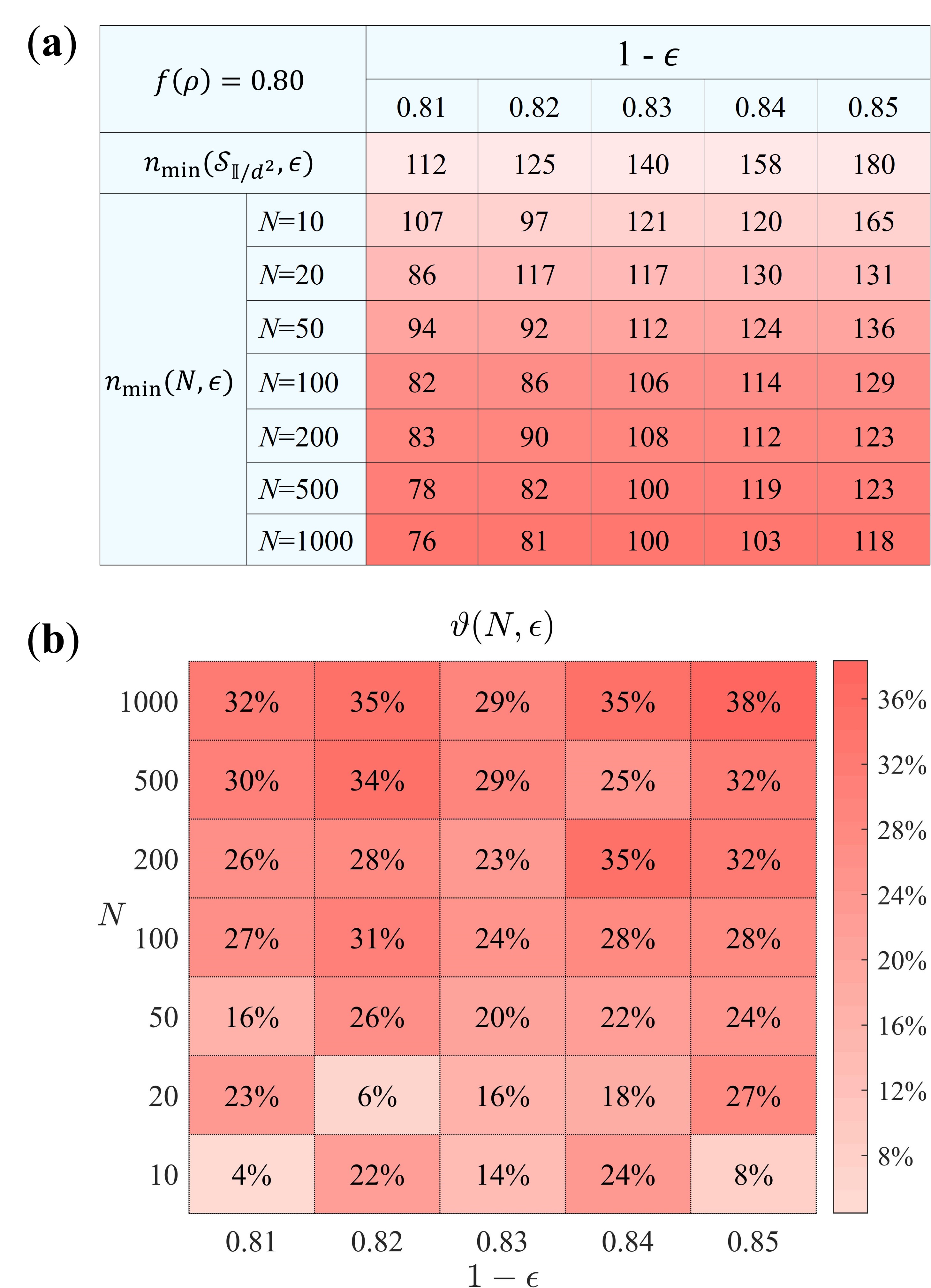}
    \caption{\textbf{The comparison of $n_{\text{min}}(\mS_{\mI/d^2}, \epsilon)$ and $n_{\text{min}}(N, \epsilon)$.} 
    Here we choose the initial state as the one considered in Eq.~\ref{eq:rho03} and take $N$ full-ranked states randomly, to construct $\tau$ (see Eq.~\ref{eq:cs-state}) and compare the number of copies that are needed to achieve higher fidelity in (a). Investigation of descent ratio $\vartheta(N, \epsilon)$ (see Eq.~\ref{eq:ratio-N}) with respect to different $N$ is shown in (b).
   }
    \label{fig:theta_epsilon_N}
\end{figure}

\subsection{Dimensionality Reduction for Catalysts}\label{subsec:DRC}
\begin{figure*} 
    \centering
    \includegraphics[width=1\textwidth]{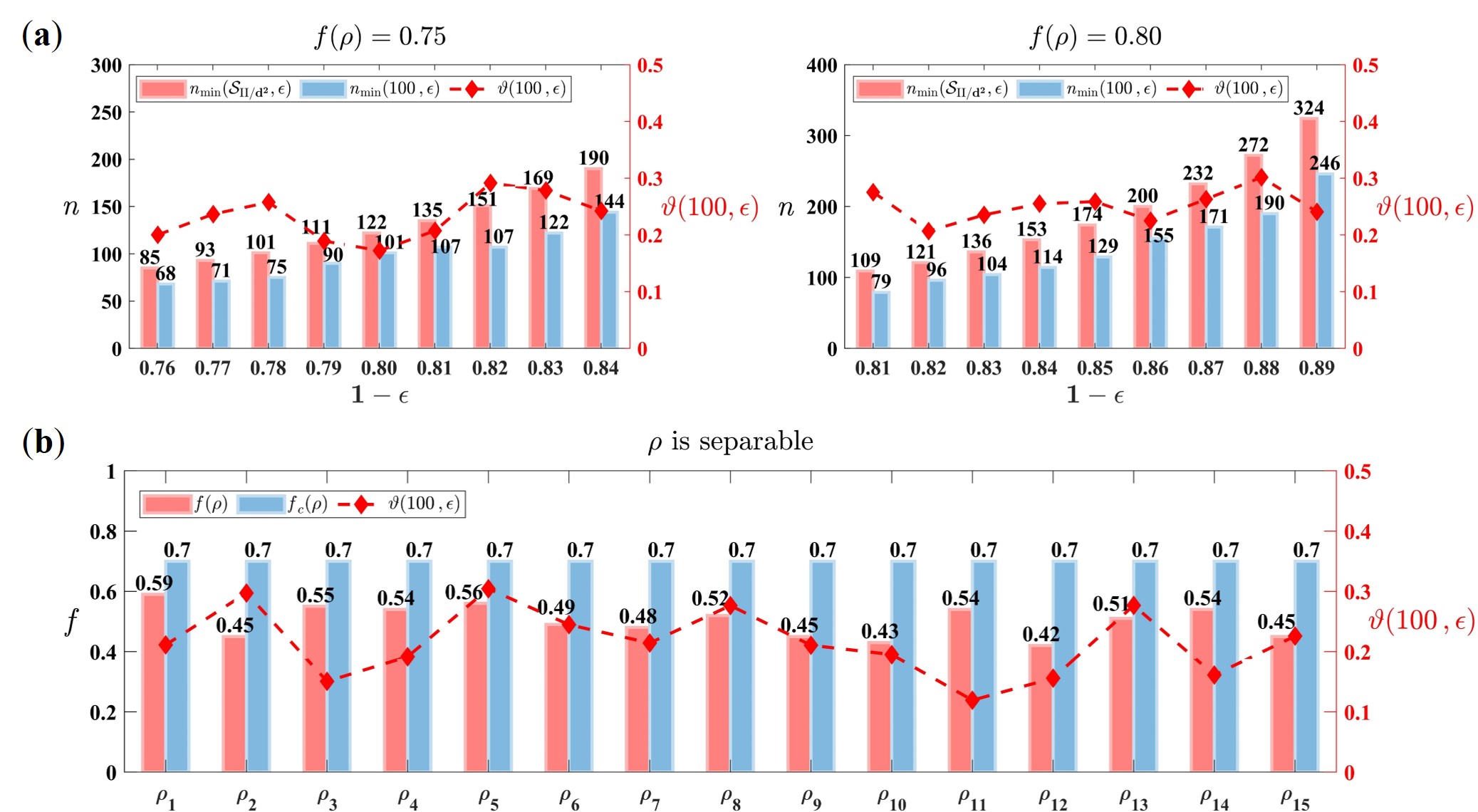}
    \caption{\textbf{The comparison of $n_{\text{min}}(\mS_{\mI/d^2}, \epsilon)$ and $n_{\text{min}}(100, \epsilon)$.} (a) illustrates the required copies for enhancing average fidelity in entangled initial states. Left for $0.75$, right for $0.80$. The red and blue bar graphs illustrate the copies of $\tau$ (see Eq.~\ref{eq:cs-state}) corresponding to $n_{\text{min}}(\mS_{\mI/d^2}, \epsilon)$ and $n_{\text{min}}(100, \epsilon)$, respectively. (b) shows the case of separable initial states, improved to $0.7$. Here, the red and blue bar graphs represent the average fidelity before and after embezzling. In both (a) and (b), the red dashed line indicates the percentage reduction in copies of $n_{\text{min}}(100, \epsilon)$ compared to $n_{\text{min}}(\mS_{\mI/d^2}, \epsilon)$. The random states selected by us are provided in TABLE~\ref{tab:state-figa} and TABLE~\ref{tab:state-figb-s} of \hyperref[Appendix]{Appendix A}.
   }
    \label{fig:opt-advantage}
\end{figure*}

In the previous section, we discussed how embezzling catalysts can be used to simulate noiseless channels with arbitrary precision. Here, we further investigate whether the same performance can be achieved with lower-dimensional catalysts. Specifically, given an initial state $\rho$ and an error $\epsilon$, we ask whether we can satisfy Eq.~\ref{eq:fidelity-CS} with a smaller $n$. To formulate this optimization problem more clearly, we ask the following question for a given quantum state $\rho$ and error $\epsilon$,
\begin{align}\label{eq:n-min}
    n_{\text{min}}(\mS_{{\rm supp}}, \epsilon)
    :=
    \min_{\tau}\ \  
    &n\notag\\
    \text{s.t.}\ \ &\epsilon^{'}=\sqrt{\frac{\epsilon(d+1)}{d}}, \notag\\
    &\sqrt{\frac{2^{D_{\text{max}}(\rho\,\|\,\tau)}}{n}} + \sqrt{1-F(\tau)}\leqslant 
    \epsilon^{'} ,\notag\\
    &\tau \in \mS_{{\rm supp}}, 
\end{align}
where $\mS_{{\rm supp}}$ is defined in Eq.~\ref{eq:S-supp}. Variable $\epsilon$ in Eq.~\ref{eq:fidelity-CS} represents the allowable error of average fidelity, which guarantees the performance of quantum teleportation. The optimization problem of Eq.~\ref{eq:n-min} is challenging, as it involves the set of $\mS_{{\rm supp}}$ with a complex structure, and non-linear constraints. Solving this problem directly and finding an analytical solution are both very difficult, but we can attempt to bound its performance and investigate how to lower the dimension of catalytic systems. 

A computable upper bound for $n_{\text{min}}(\mS_{{\rm supp}}, \epsilon)$ (see Eq.~\ref{eq:n-min}) can be obtained by replacing the original $\mS_{{\rm supp}}$ with the following subset, which avoids this difficulty.
\begin{align}\label{eq:S-zeta}
    \mS_{\zeta}:=\{p\phi^++(1-p)\zeta\,|\,p\in[0,1)\}.
\end{align}
Unlike Eq.~\ref{eq:S-p-zeta}, this set, i.e., $\mS_{\zeta}$, has a fixed full-ranked state $\zeta$ inside its definition. Its relation with $\mS$ and $\mS_{{\rm supp}}$ is given by the inclusion chain below.
\begin{align}
    \mS_{\zeta}\subset\mS\subset\mS_{{\rm supp}}.
\end{align}
By replacing the set $\mS_{{\rm supp}}$ inside $n_{\text{min}}(\mS_{{\rm supp}}, \epsilon)$, we can define $n_{\text{min}}(\mS, \epsilon)$ and $n_{\text{min}}(\mS_{\zeta}, \epsilon)$ respectively. Fix $\epsilon$, then they satisfy the following inequality chain.
\begin{align}
    n_{\text{min}}(\mS_{{\rm supp}}, \epsilon)
    \leqslant
    n_{\text{min}}(\mS, \epsilon)
    \leqslant
    n_{\text{min}}(\mS_{\zeta}, \epsilon).
\end{align}
It is worth mentioning that, in lots of applications of the convex-split lemma, especially in the study of quantum resource theories, people will conventionally take $\zeta$ as the maximally mixed state $\mI/d^2$ and obtain computable bounds. But we will show that, by choosing other full-ranked states, we may obtain an advantage in saving the dimension of the catalytic system. To show the statement, let us take $n_{\text{min}}(\mS_{\mI/d^2}, \epsilon)$ as a benchmark, and compare the performance of our protocol with $n_{\text{min}}(\mS_{\mI/d^2}, \epsilon)$.

Let’s begin our exploration with a random scenario: consider the initial state $\rho$ that is shared between the sender, Alice, and the receiver, Bob, and is randomly chosen as
\begin{align}\label{eq:rho03}
\left[{
  \begin{array}{cccc}
    \scriptstyle 0.28 & \scriptstyle 0.04+0.08i & \scriptstyle -0.03-0.03i &  \scriptstyle 0.33+0.01i\\
    \scriptstyle 0.04-0.08i & \scriptstyle 0.14 & \scriptstyle -0.11+0.01i &  \scriptstyle 0.08- 0.07i\\
    \scriptstyle -0.03+0.03i & \scriptstyle -0.11-0.01i & \scriptstyle 0.13 &  \scriptstyle -0.05- 0.01i\\
    \scriptstyle 0.33-0.01i & \scriptstyle 0.08+ 0.07i & \scriptstyle -0.05+ 0.01i &  \scriptstyle 0.45\\
  \end{array}
  }
\right].
\end{align}

In this case, we select four random full-ranked states, denoted as $\zeta_1$ to $\zeta_4$. Our numerical examples show that with these states, fewer copies of $\tau$ (see Eq.~\ref{eq:cs-state}) are needed, and hence, they require less dimensions for catalytic systems, as illustrated in Fig.~\ref{fig:tau-set}. Beyond the fact that
\begin{align}
n_{\text{min}}(\mS_{\zeta_i}, \epsilon)
\leqslant
n_{\text{min}}(\mS_{\mI/d^2}, \epsilon), 
\quad
\forall \, i\in\{1, \ldots, 4\}.
\end{align}
we have also defined the descent ratio, 
\begin{align}\label{eq:ratio}
    \vartheta(\epsilon)
    :=
    \frac{
    n_{\text{min}}(\mS_{\mI/d^2}, \epsilon)
    -
    n_{\text{min}}(\mS_{\zeta}, \epsilon)
    }
    {n_{\text{min}}(\mS_{\mI/d^2}, \epsilon)}
\end{align}
portraying the conservation of catalyst dimensions. This is illustrated in Fig.~\ref{fig:tau-set} for $\zeta_1$ to $\zeta_4$. Through this exemplification, it becomes evident that the randomly selected full-ranked states exhibit superior performance compared to $\mI/d^2$. 

On the other hand, we have also learned that different full-ranked states will have different performance in the  teleportation with embezzling catalysts. In Fig.~\ref{fig:tau-set}, $\zeta_4$ requires the least dimension for catalyst with different values of $\epsilon$. In practice, we can pick up a finite $N$ full-ranked states $\zeta_i$ ($i=1, \ldots, N$) randomly, choose the best one, and denote the minimal copies for the catalytic system as $n_{\text{min}}(N, \epsilon)$, which satisfies
\begin{align}
n_{\text{min}}(\mS_{{\rm supp}}, \epsilon)
\leqslant
n_{\text{min}}(\mS, \epsilon)
\leqslant
n_{\text{min}}(N, \epsilon).
\end{align}
Here we have three remarks: (\romannumeral1) The upper bound $n_{\text{min}}(N, \epsilon)$ is constructed by exclusively selecting full-ranked states. This design choice allows our protocol to operate effectively with any bipartite state $\rho_{AB}$ shared between Alice and Bob. However, opting for states containing only the support of the initial state $\rho_{AB}$ may yield superior performance, facilitating embezzling with fewer catalytic systems. (\romannumeral2) The bound $n_{\text{min}}(N, \epsilon)$ can be computed efficiently as $D_{\text{max}}(\rho\,\|\,\tau)$ forms a SDP. (iii) In principle, increasing the number of randomly chosen states by us can result in a tighter bound and the discovery of catalysts with lower dimensions. Surprisingly, the increase from $100$ to $1000$ samples does not yield a substantial enhancement in uncovering a catalyst with significantly reduced dimension, as depicted in Fig.~\ref{fig:theta_epsilon_N}. This modest gain, unfortunately, is accompanied by a considerable escalation in computational resource costs. Similar trends are observed in the descent ratio, which is defined as 
\begin{align}\label{eq:ratio-N}
    \vartheta(N, \epsilon)
    :=
    \frac{
    n_{\text{min}}(\mS_{\mI/d^2}, \epsilon)
    -
    n_{\text{min}}(N, \epsilon)
    }
    {n_{\text{min}}(\mS_{\mI/d^2}, \epsilon)}.
\end{align}

No matter whether the initial state shared between the sender and the receiver is entangled or separable, our results tell us that embezzling can always enhance the performance of teleportation; but definitely with different precision, they will require different copies of $\tau$ (see Eq.~\ref{eq:cs-state}) and hence, different catalytic system dimensions. More numerical experiments are demonstrated in Fig.~\ref{fig:opt-advantage}.

Finally, to ensure that our method of reducing the dimension of embezzling catalysts by randomly choosing full-ranked states is statistically robust and unbiased with respect to the initial state shared between the sender and receiver, we select the initial state $\rho$ using the Monte Carlo method. We vary the number of randomly chosen full-ranked states for constructing $\tau$ from $10$ to $100$. As depicted in Fig.~\ref{fig:MonteCarlo}, the efficacy of our optimization method improves with increasing $N$. Notably, when $N= 100$, approximately $99.93\%$ of the samples demonstrate an improvement.

\begin{figure*}
    \centering
    \includegraphics[width=1\textwidth]{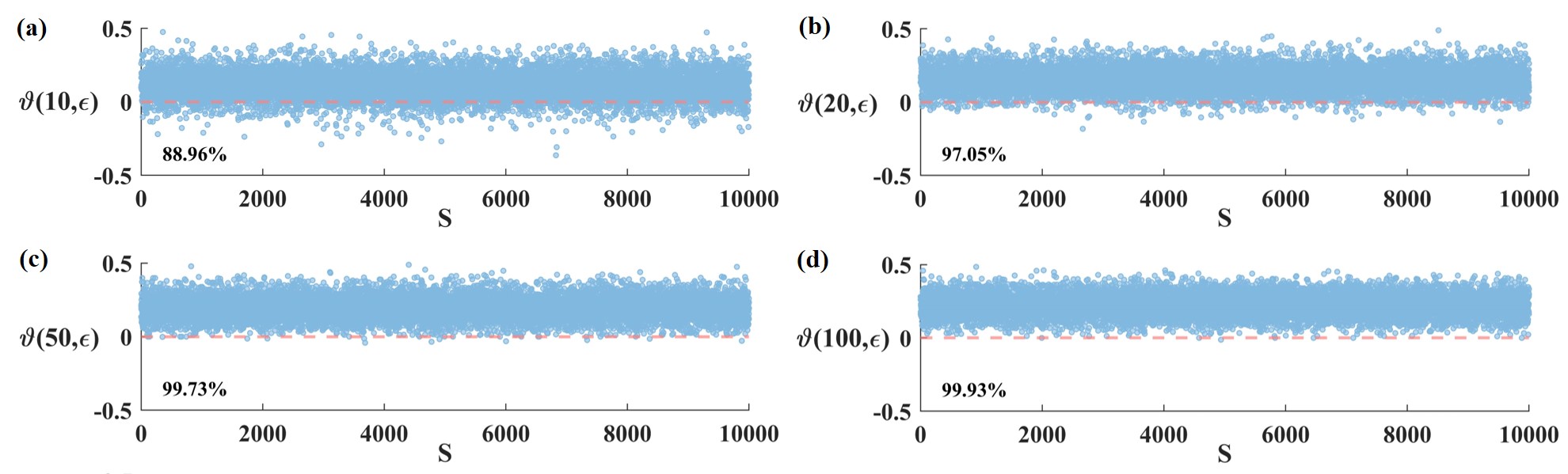}
    \caption{
    {\bf The performance of $\vartheta(N,\epsilon)$ with initial state $\rho$ chosen according to Monte Carlo method.} 
    We prepare a total of $S$ initial state samples $\rho$ (see Fig.~\ref{fig:teleportation}) using the Monte Carlo method, with a random error threshold $\epsilon \in (0, 1 - f(\rho))$. Here, $N$ represents the number of randomly generated full-rank quantum states used for constructing $\tau$ (see Eq.~\ref{eq:cs-state}). The dashed red line represents the case where $\vartheta(N, \epsilon) = 0$, above which indicates a strict improvement compared to using the maximally mixed state $\mI/d^2$ in constructing convex-split-lemma-assisted teleportation.
    }
    \label{fig:MonteCarlo}
\end{figure*}

\subsection{Embezzling-State-Assisted Teleportation}\label{subsection:embezzling-state}
The concept of embezzling states was originally proposed by van Dam and Hayden in the context of entanglement theory~\cite{PhysRevA.67.060302}. Since then, it has found applications in various fields of quantum information, such as thermodynamics~\cite{doi:10.1073/pnas.1411728112,Ng_2015}, coherence~\cite{PhysRevA.100.042323,PhysRevLett.113.150402}, and the quantum reverse Shannon theorem~\cite{4957651,doi:10.1007/s00220-011-1309-7}, among others~\cite{7377103,10.1063/1.4938052}. Using the idea of embezzling states, we can prove the following lemma. For the sake of brevity, we will omit the subscripts $AB$ and $CC^{'}$ where their exclusion does not cause confusion.

\begin{lemma}
[{\bf (Entanglement Embezzling~\cite{PhysRevA.67.060302})}]
\label{lem:emb-state}
    Let $\tau^{E}$ be an embezzling state on systems $CC^{'}$, defined as
    \begin{align}\label{eq:emb-state}
        \ket{\tau^{E}} = 
        \frac{1}{\sqrt{c_M}}\sum_{j=1}^M\frac{1}{\sqrt{j}}\ket{jj} ,
    \end{align}
    with $c_M:= \sum_{j=1}^{M}\frac{1}{j}$. Then for any bipartite state $\rho$ on systems $AB$ with $d:= \dim\mH_A= \dim\mH_B$, there always exists an LOCC operation $\Lambda^E\in\rm{LOCC}(AC:BC^{'})$, such that
    \begin{align}\label{eq:FU-e}
        F_U(\Lambda^E(\rho\otimes \tau^{E}), \phi^+\otimes \tau^{E})\geqslant 
        \left(\frac{\log M-\log d}{\log M}\right)^2.
    \end{align} 
\end{lemma}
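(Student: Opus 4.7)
The plan is to construct $\Lambda^E$ explicitly in two moves. Since LOCC includes discard-and-prepare, Alice and Bob may locally replace their halves of $\rho$ by the fixed product state $\ket{0}_A\ket{0}_B$, so it suffices to exhibit local unitaries $U$ on $CA$ and $V$ on $C'B$ for which $(U\otimes V)\ket{\tau^E}_{CC'}\ket{00}_{AB}$ has large overlap with the target $\ket{\tau^E}_{CC'}\ket{\phi^+}_{AB}$. Both the reduced input and the target are pure, so the Uhlmann fidelity collapses to $|\mathcal{A}|^2$ where $\mathcal{A}:=\bra{\tau^E}_{CC'}\bra{\phi^+}_{AB}(U\otimes V)\ket{\tau^E}_{CC'}\ket{00}_{AB}$, and my goal reduces to proving $|\mathcal{A}|\geq (\log M-\log d)/\log M$.

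To choose $U$ and $V$, I use a base-$d$ relabeling of the embezzling index. Assuming for simplicity that $d$ divides $M$, every $j\in\{1,\dots,M\}$ has a unique decomposition $j=(m-1)d+k$ with $m\in\{1,\dots,M/d\}$ and $k\in\{1,\dots,d\}$. I set $U\ket{j}_C\ket{0}_A=\ket{m}_C\ket{k}_A$ and $V\ket{j}_{C'}\ket{0}_B=\ket{m}_{C'}\ket{k}_B$, extending each partial isometry to a unitary on the full space arbitrarily. Substituting into $\ket{\tau^E}\ket{00}$, relabelling the sum via $j\leftrightarrow(m,k)$, and using orthogonality of the computational basis of the target $\ket{\tau^E}\ket{\phi^+}=\frac{1}{\sqrt{c_M d}}\sum_{n,\ell}\frac{1}{\sqrt{n}}\ket{n,\ell,n,\ell}$ reduces $\mathcal{A}$ to the clean double sum
\begin{align*}
\mathcal{A}=\frac{1}{c_M\sqrt{d}}\sum_{m=1}^{M/d}\frac{1}{\sqrt{m}}\sum_{k=1}^{d}\frac{1}{\sqrt{(m-1)d+k}}.
\end{align*}

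The final step lower bounds this sum. Applying Jensen's inequality to the convex function $x\mapsto 1/\sqrt{x}$ gives $\sum_{k=1}^d 1/\sqrt{(m-1)d+k}\geq d/\sqrt{md-(d-1)/2}\geq \sqrt{d/m}$, so $\mathcal{A}\geq \frac{1}{c_M}\sum_{m=1}^{M/d}\frac{1}{m}=c_{M/d}/c_M$. The elementary harmonic-number estimates $c_M-c_{M/d}=\sum_{k=M/d+1}^M 1/k\leq \ln d$ and $c_M\geq \ln M$ then yield $c_{M/d}/c_M\geq 1-\log d/\log M=(\log M-\log d)/\log M$, and squaring delivers the stated fidelity bound. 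The main obstacle is conceptual rather than computational: one must spot the base-$d$ relabeling and recognize the near scale-invariance of the distribution $\{1/(jc_M)\}_j$ under the coarse-graining $j\mapsto\lceil j/d\rceil$, which is precisely what allows a $d$-dimensional maximally entangled state to be ``embezzled'' at a cost decaying like $\log d/\log M$. A minor technical point is the divisibility assumption $d\mid M$, which can be removed by padding $M$ to the next multiple of $d$ and absorbing the resulting lower-order correction into the extension of $U$ and $V$ to the full space.
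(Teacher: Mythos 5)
Your proof is correct and takes essentially the same route as the paper's: discard $\rho$ and prepare a fixed product state on $AB$, apply a local base-$d$ relabeling unitary (the paper's $U_{AC}\otimes U_{BC'}$ mapping $\ket{\omega}$ to $\ket{\phi^+}\otimes\ket{\tau^E}$ is exactly your $U\otimes V$ read in the reverse direction), and reduce the overlap to the harmonic-number ratio $c_{\lfloor M/d\rfloor}/c_M\geqslant(\log M-\log d)/\log M$. The only cosmetic differences are your use of Jensen's inequality where the paper uses the term-by-term comparison $\omega_{1j}\leqslant 1/\sqrt{jc_M}$, and both treatments gloss over the floor/ceiling corrections when $d\nmid M$ to the same degree.
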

\begin{proof}
Let's begin our proof by introducing the state $\omega$ on systems $ABCC^{'}$.
\begin{align}\label{eq:omega}
\ket{\omega}=\sum^{d}_{i=1} \sum^{M}_{j=1}\omega_{ij} \ket{ii}\ket{jj} ,
\end{align}
where the coefficient of $\ket{\omega}$, i.e.,
\begin{align}
    \omega_{ij}:=\frac{1}{\sqrt{\lceil (i-1)M+j/d \rceil d c_M}},
\end{align}
follows the dictionary order, namely
\begin{align}
    \omega_{11}\geqslant  \omega_{12}\geqslant  \cdots \geqslant  \omega_{1M}\geqslant  \cdots \geqslant  \omega_{dM}.
\end{align}
In particular, it is straightforward to check that the first $M$ coefficients of $\ket{\omega}$ are equal to
\begin{align}\label{eq:omega_j}
    \omega_{1j}
    =\frac{1}{\sqrt{\lceil j/d \rceil dc_M}} 
    \leqslant \frac{1}{\sqrt{jc_M}},
    \quad\forall\,1\leqslant j \leqslant M.
\end{align}
In this case, the inner product between $\ket{11}\otimes\ket{\tau^E} $ and $\ket{\omega}$ is bounded from below by
\begin{align}
\bigg|\bigg\langle
\ket{11}\otimes\ket{\tau^E} ,
\ket{\omega}
\bigg\rangle\bigg|
&=
\sum_{j=1}^M\frac{\omega_{1j}}{\sqrt{jc_M}}
\geqslant
\sum_{j=1}^M\omega_{1j}^2\notag\\
&\geqslant 
\sum_{i=1}^{\lfloor M/d \rfloor}\sum_{j=1}^d \frac1{idc_M}
=
\frac{\sum_{i=1}^{\lfloor M/d \rfloor}\frac1i}{\sum_{i=1}^M\frac1i}
\notag\\
&\geqslant 
\frac{\log M- \log d}{\log M}. \label{eq:omega_Fu}
\end{align}
The first inequality is derived from Eq.~\ref{eq:omega_j}, the second stems from a direct calculation involving the initial $d\lfloor M/d \rfloor$ terms of the sum, and the third is obtained by employing the definite integral.

Next, let's continue our proof by considering the rearrangement of the systems $AC$ as follows
\begin{align}\label{eq:unitary}
    U_{AC}: 
    \mH_{A}\otimes \mH_{C} &\rightarrow \mH_{A}\otimes \mH_{C}\notag\\
    \ket{i}_A \otimes \ket{j}_C &\mapsto \ket{k}_A \otimes \ket{l}_C,
\end{align}
where $k:= (i-1)M+j-(l-1)d$ and $l:= \left\lceil \frac{(i-1)M+j}{d} \right\rceil$. Such a mapping $U_{AC}$ constitutes a unitary transformation on systems $AC$. Similarly, we can define unitary operations $U_{BC^{'}}$ on systems $BC^{'}$. By applying $U_{ABCC^{'}}:= U_{AC}\otimes U_{BC'}$ to $\ket{\omega}$ (see Eq.~\ref{eq:omega}), we obtain
\begin{align}\label{eq:Lambda-eU}
U_{ABCC^{'}}\ket{\omega}=\ket{\phi^+}\otimes\ket{\tau^E} .
\end{align}
Setting 
\begin{align}\label{eq:lambda-e}
\Lambda^E(\cdot):= U_{ABCC^{'}}\circ \ketbra{11}_{AB}\Tr_{AB}[\cdot],
\end{align}
we establish the following chain of inequalities
\begin{align}
    &F_U\left( \Lambda^E(\rho\otimes\tau^E ), \phi^+\otimes \tau^E \right)\notag\\
    = &F_U\left( U(\ketbra{11}{11}\otimes\tau^E) U^{\dag}, \phi^+\otimes \tau^E \right)\notag\\
    = &\bigg|\bigg\langle 
       \ket{11}\otimes\ket{\tau^E} ,
       \ket{\omega}
       \bigg\rangle\bigg|^2\notag\\
    \geqslant  &\left(\frac{\log M- \log d}{\log M}\right)^2. \label{eq:FUtauE}
\end{align}
Here, the first equality arises from the definition of the LOCC operation $\Lambda^E$. With this, we complete our proof.
\end{proof}

\begin{figure}[t] 
    \centering
    \includegraphics[width=0.48\textwidth]{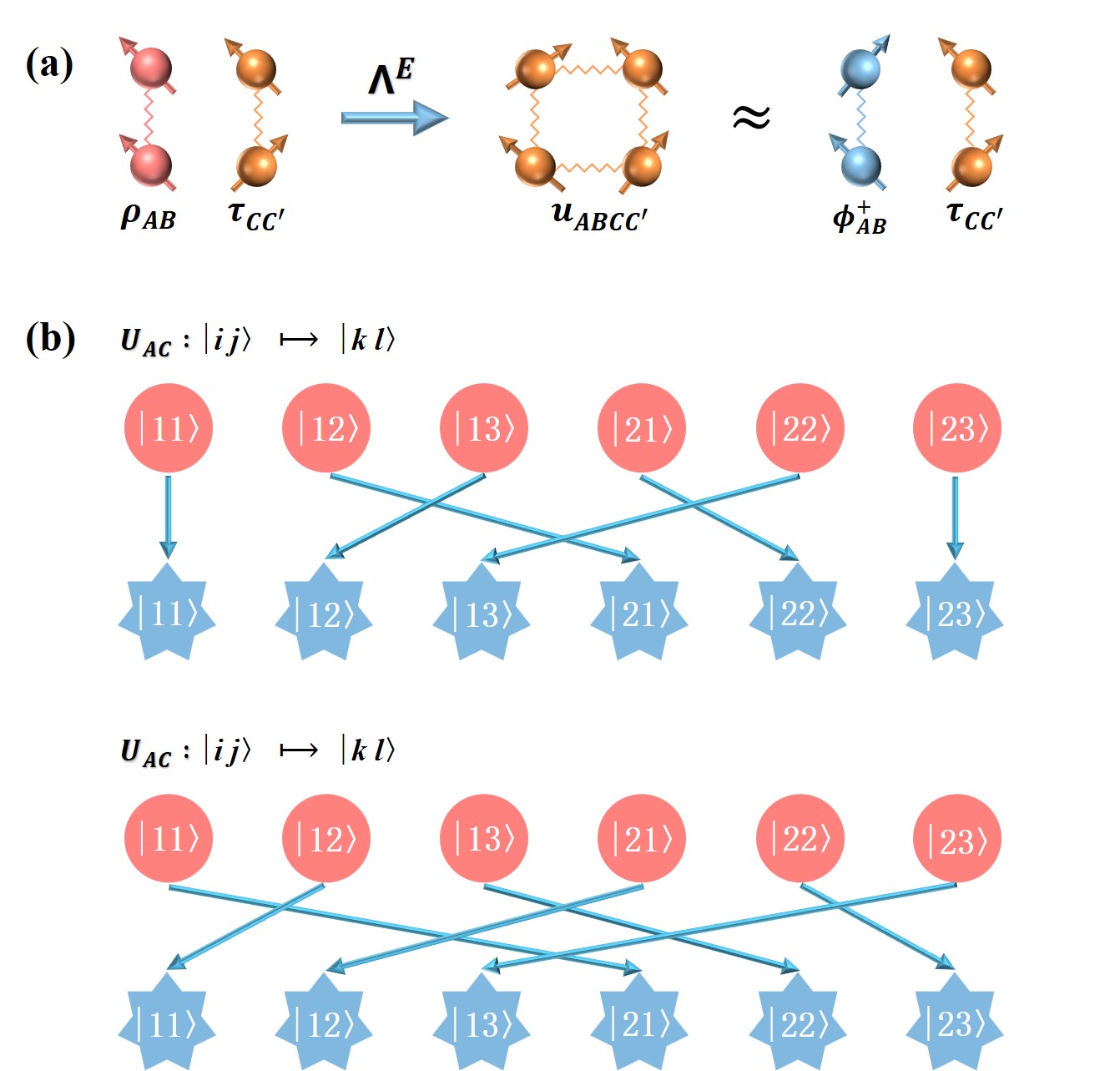}
    \caption{\textbf{The LOCC operation $\Lambda^{E}$ defined in Eq.~\ref{eq:lambda-e}.} In (a), we present a graphical representation of the embezzlement protocol $\Lambda^{E}$. In (b), we offer two examples of the operations $U_{AC}$ and $U_{BC^{'}}$ for the specific scenario where $d=2$ and $M=3$, demonstrating that the construction of local unitaries acting on $AC$ and $BC^{'}$ is not unique. This construction can be extended to arbitrary finite-dimensional systems.
    }
 \label{fig:embezzling}
\end{figure}
Note that the construction of the unitary transforming $\ket{\omega}$ into $\ket{\phi^+}\otimes\ket{\tau^E} $ is not unique; an alternative approach is illustrated in Fig.~\ref{fig:embezzling}. Thanks to Lem.~\ref{lem:emb-state}, we can now present an alternative proof of Thm.~\ref{thm:teleportation}, using the concept of embezzling states.
\begin{theorem}
[{\bf (Embezzling-State-Assisted Teleportation)}]
\label{thm:E}
Given any bipartite state $\rho$ on systems $AB$ and any positive number $\epsilon>0$, we can find an embezzling catalyst $\tau^E $ (see Eq.~\ref{eq:emb-state}) with Schmidt rank 
\begin{align}\label{eq:M}
M= 
\left\lceil 
d^{\frac{1}{1-\sqrt{1-\epsilon(d+1)/d}}} 
\right\rceil,
\end{align}
where $d:= \dim\mH_A= \dim\mH_B$, and a LOCC operation $\Lambda^E$ (see Eq.~\ref{eq:lambda-e} and Fig.~\ref{fig:embezzling}), such that
\begin{align}
f_c(\rho)\geqslant  1-\epsilon.
\end{align} 
\end{theorem}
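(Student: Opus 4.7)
The plan is to bootstrap Lemma~\ref{lem:emb-state} with the teleportation fidelity formula of Eq.~\ref{eq:teleportation-fidelity} to establish the desired lower bound on $f_c(\rho)$, and then to back out the required Schmidt rank $M$ from the resulting inequality.

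First I would apply Lemma~\ref{lem:emb-state} to obtain the LOCC operation $\Lambda^E$ and the embezzling state $\tau^E$ of Schmidt rank $M$ (to be chosen) such that
\begin{align}
F_U\left(\Lambda^E(\rho\otimes\tau^E),\,\phi^+\otimes\tau^E\right)\geqslant \left(\frac{\log M-\log d}{\log M}\right)^2.
\end{align}
The next step is to pass to the reduced state on $AB$. Define $\rho^{E}:=\Tr_{CC'}[\Lambda^E(\rho\otimes\tau^E)]$; since partial trace is a quantum channel and the Uhlmann fidelity is monotone non-decreasing under channels, I get
\begin{align}
F(\rho^{E})=F_U(\rho^{E},\phi^+)\geqslant \left(\frac{\log M-\log d}{\log M}\right)^2,
\end{align}
using also that $\Tr_{CC'}[\phi^+\otimes\tau^E]=\phi^+$.

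Second, I would insert this into the teleportation fidelity identity (Eq.~\ref{eq:teleportation-fidelity}) applied to the effective shared state $\rho^{E}$, yielding
\begin{align}
f_c(\rho)=\frac{F(\rho^{E})\,d+1}{d+1}\geqslant \frac{d}{d+1}\left(\frac{\log M-\log d}{\log M}\right)^2+\frac{1}{d+1}.
\end{align}
The target inequality $f_c(\rho)\geqslant 1-\epsilon$ is then equivalent to
\begin{align}
F(\rho^{E})\geqslant 1-\frac{\epsilon(d+1)}{d}.
\end{align}

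Third, I would solve the resulting condition $\left(\tfrac{\log M-\log d}{\log M}\right)^2\geqslant 1-\tfrac{\epsilon(d+1)}{d}$ for $M$. Taking square roots (noting that the left-hand side is non-negative whenever $M\geqslant d$) and rearranging gives
\begin{align}
\frac{\log d}{\log M}\leqslant 1-\sqrt{1-\tfrac{\epsilon(d+1)}{d}},
\end{align}
so choosing $M=\lceil d^{1/(1-\sqrt{1-\epsilon(d+1)/d})}\rceil$ as prescribed in Eq.~\ref{eq:M} suffices. The only subtlety to double-check is that this quantity is well-defined and finite, which amounts to observing that $\epsilon(d+1)/d<1$ can be assumed without loss of generality (otherwise the claim is vacuous since $f_c\leqslant 1$), so that the square root is real and strictly less than $1$, making the exponent positive and finite. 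Overall there is no serious obstacle beyond this bookkeeping, since the heavy lifting, namely the explicit embezzling LOCC and its fidelity bound, is already furnished by Lemma~\ref{lem:emb-state}.
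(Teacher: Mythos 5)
Your proposal is correct and follows essentially the same route as the paper's proof: invoke Lemma~\ref{lem:emb-state}, push the fidelity bound through the partial trace via data processing to lower-bound the entanglement fraction of the post-catalysis state on $AB$, and convert to average fidelity via Eq.~\ref{eq:teleportation-fidelity}; your explicit derivation of $M$ from the inequality is just the algebra the paper leaves implicit. The only nitpick is your remark that the claim is ``vacuous since $f_c\leqslant 1$'' when $\epsilon(d+1)/d\geqslant 1$ --- the correct trivialization there is that $f_c\geqslant 1/(d+1)\geqslant 1-\epsilon$ automatically --- but this does not affect the argument.
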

\begin{figure*}
    \centering
    \includegraphics[width=1\textwidth]{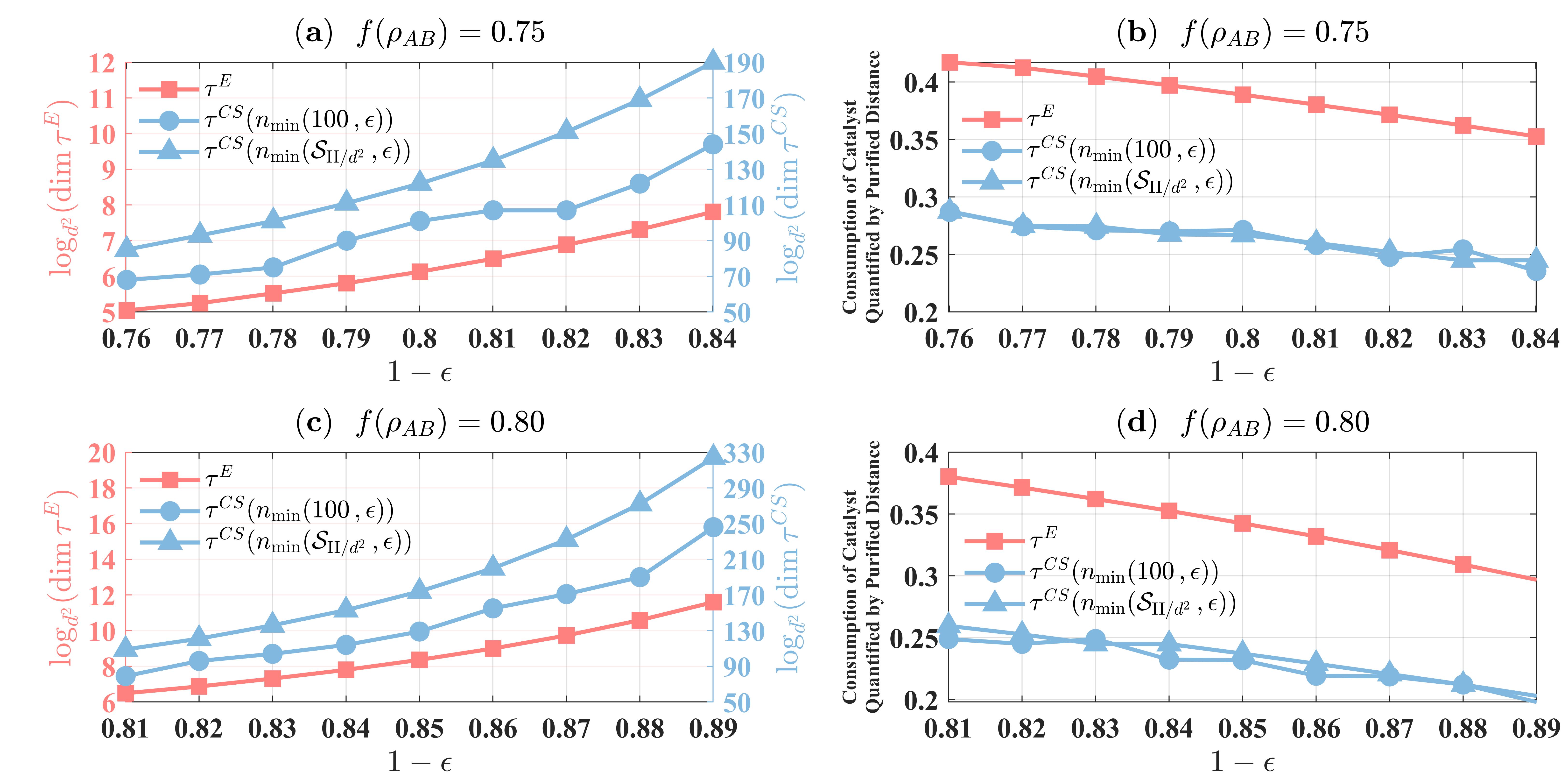}
    \caption{\textbf{Comparisons across different catalytic teleportation protocols.} 
    Figures (a) and (c) illustrate the minimum dimension of embezzling catalysts needed to achieve a given performance of teleportation with random initial states $\rho$ (see Fig.~\ref{fig:teleportation}), with fidelities of $0.75$ and $0.8$, as presented in TABLE~\ref{tab:state-figa}. Here, $\tau^E$ indicates the catalyst dimension based on embezzling states (see Eq.~\ref{eq:emb-state}), while $\tau^{CS}(n_{\rm min}(100\,, \epsilon))$ and $\tau^{CS}(n_{\rm min}(\mS_{\mI/d^2}\,, \epsilon))$ denote the catalyst dimensions constructed using the convex-split lemma (see Lem.~\ref{lem:convex-split}). The former employs a selection of $100$ randomly chosen full-ranked states for constructing $\tau$ (see Eq.~\ref{eq:cs-state}), whereas the latter utilizes maximally mixed states $\mI/d^2$. To enhance readability, we adjusted the proportions according to the varying average fidelity values across these figures. Figures (b) and (d) demonstrate the consumption of embezzling catalysts during the teleportation process in terms of purified distance. Specifically, the blue line denotes the upper bound of the consumption for the embezzling catalyst $\tau^{CS}$ (see Eq.~\ref{eq:consumption-convex}), while the pink line represents the exact consumption of the catalyst $\tau^E$ in catalytic teleportation (see Eq.~\ref{eq:consumption-emb}). These comparisons indicate that the superior performance of the embezzling-state-assisted protocol, with the same dimension as the convex-split-lemma-assisted protocol, comes at the cost of a greater change from its original form before catalytic teleportation. 
    }
    \label{fig:embezzling-convex}
\end{figure*}
\begin{proof}
At the beginning of  teleportation with embezzling catalysts, we prepare the state $\rho\otimes\tau^E $. After applying LOCC operation $\Lambda^E$, we denote the state on systems $AB$ as $\rho^{(M)}$, whose entanglement fraction satisfies
\begin{align}
    F\left(\rho^{(M)}\right)\geqslant &F_U\left(\Lambda^E(\rho\otimes \tau^E ), \phi^+\otimes \tau^E \right)\notag\\
    \geqslant &1-\frac{\epsilon(d+1)}{d},
\end{align}
where the first inequality relies on the quantum data processing, while the second is based on Lem.~\ref{lem:emb-state}. Finally, using $\rho^{(M)}$ for standard teleportation, we obtain
\begin{align}
    f_c(\rho)= f(\rho^{(M)}),
\end{align}
which concludes the proof.
\end{proof}

In addition to demonstrating the improved performance of the embezzling-state-assisted protocol, we examine the consumption of the embezzling catalyst during catalytic teleportation. Specifically, we investigate the change in the embezzling catalysts using the purified distance. To begin, let us consider the state of the catalytic system after teleportation, which is given by
\begin{align}
    \xi^{E}
    :=
    &\Tr_{AB}[\Lambda^{E}(\rho\otimes\tau^{E})]\notag\\
    =
    &\frac{1}{c_M}\sum^M_{m=1}
         \bigg(\sum^{K-1}_{i=1}\frac{1}{\sqrt{k_im}}\left(\ketbra{ii}{KK}+\ketbra{KK}{ii}\right)
        \notag\\
    & \quad\quad\quad\quad\quad\quad\quad\quad\quad\quad   
    + \frac{1}{m}\ketbra{KK}{KK}\bigg),
\end{align}
where $K:=\lceil m/d \rceil$ and $k_i:= m-\lfloor (m-1)/d \rfloor d +(i-1)d$. The consumption of the embezzling state during catalytic teleportation can be quantified by its change, which is given by
\begin{align}
    F_{U}(\xi^{E}, \tau^E)= \frac{1}{c^2_M}\sum^M_{m=1}
         \bigg(\sum^{K-1}_{i=1}\frac{2}{\sqrt{ik_imK}}
        + \frac{1}{mK}\bigg),
\end{align}
resulting in the exact form of the purified distance between $\xi^{E}$ and $\tau^E$, namely
\begin{align}\label{eq:consumption-emb}
        P\left(\xi^{E}, \tau^{E}\right)
        =
        \sqrt{1-\frac{1}{c^2_M}\sum^M_{m=1}
         \bigg(\sum^{K-1}_{i=1}\frac{2}{\sqrt{ik_imK}}
        + \frac{1}{mK}\bigg)}.
    \end{align}
Using Eq.~\ref{eq:FUtauE} and the quantum data processing inequality, we can derive a much simpler upper bound
\begin{align}
    P\left(\xi^{E}, \tau^{E}\right)
    \leqslant
    \sqrt{2\log_{M} d}.
\end{align}

In embezzling-state-assisted teleportation, if the acceptable error is $\delta$, the minimal dimension of the embezzling state -- equivalently, the minimum Schmidt rank $M$ of the embezzling state -- necessary to ensure that the embezzling catalyst's variation after catalytic teleportation remains within $\delta$ is determined by
\begin{align}
    M\geqslant d^{\frac{2}{\delta^2}}.
\end{align}

Up to this point, we've explored two primary methods of embezzling quantum teleportation: one based on the convex-split lemma, and the other on embezzling states. The former can be further categorized based on the selection of full-ranked states, either through conventional means employing maximally mixed states or by randomly selecting finite full-ranked states. In our numerical experiments, we compare the performance of these protocols in terms of the dimensions required for catalytic systems, and the changes in these embezzling catalysts during catalytic teleportation, as measured by purified distance. For convex-split-lemma-assisted protocols, the approach using randomly chosen full-ranked states to construct $\tau$ outperforms the one using maximally mixed states $\mI/d^2$. However, compared to the embezzling-state-assisted protocol, the latter achieves better performance with the same catalytic system dimension. This improved performance, however, comes at the cost of greater variation in the catalysts during the catalytic teleportation process, as demonstrated in Fig.~\ref{fig:embezzling-convex}. 

\section{Comparison with Correlated Catalysts}\label{sec:compare-three}
Catalytic quantum teleportation based on correlated catalysts was investigated in Ref.~\cite{PhysRevLett.127.080502}, where Duan states~\cite{PhysRevA.71.042319} were used to exceed the performance of standard teleportation (see also Ref.~\cite{PhysRevLett.126.150502} for its application in quantum thermodynamics). However, the protocol based on Duan states cannot guarantee that teleportation can be done with arbitrary precision, even though an infinite-dimensional Duan state has been considered. In Thms.~\ref{thm:CS} and~\ref{thm:E}, we showed that by allowing a small amount of catalyst consumption, we can overcome this limitation and succeed at the same task with a finite-dimensional catalyst. This demonstrates the power and versatility of embezzling quantum teleportation. Here, we present more numerical experiments and comparisons between catalytic teleportation with correlated catalysts and embezzling catalysts.

\begin{figure}[t]
    \centering
    \includegraphics[width=0.42\textwidth]{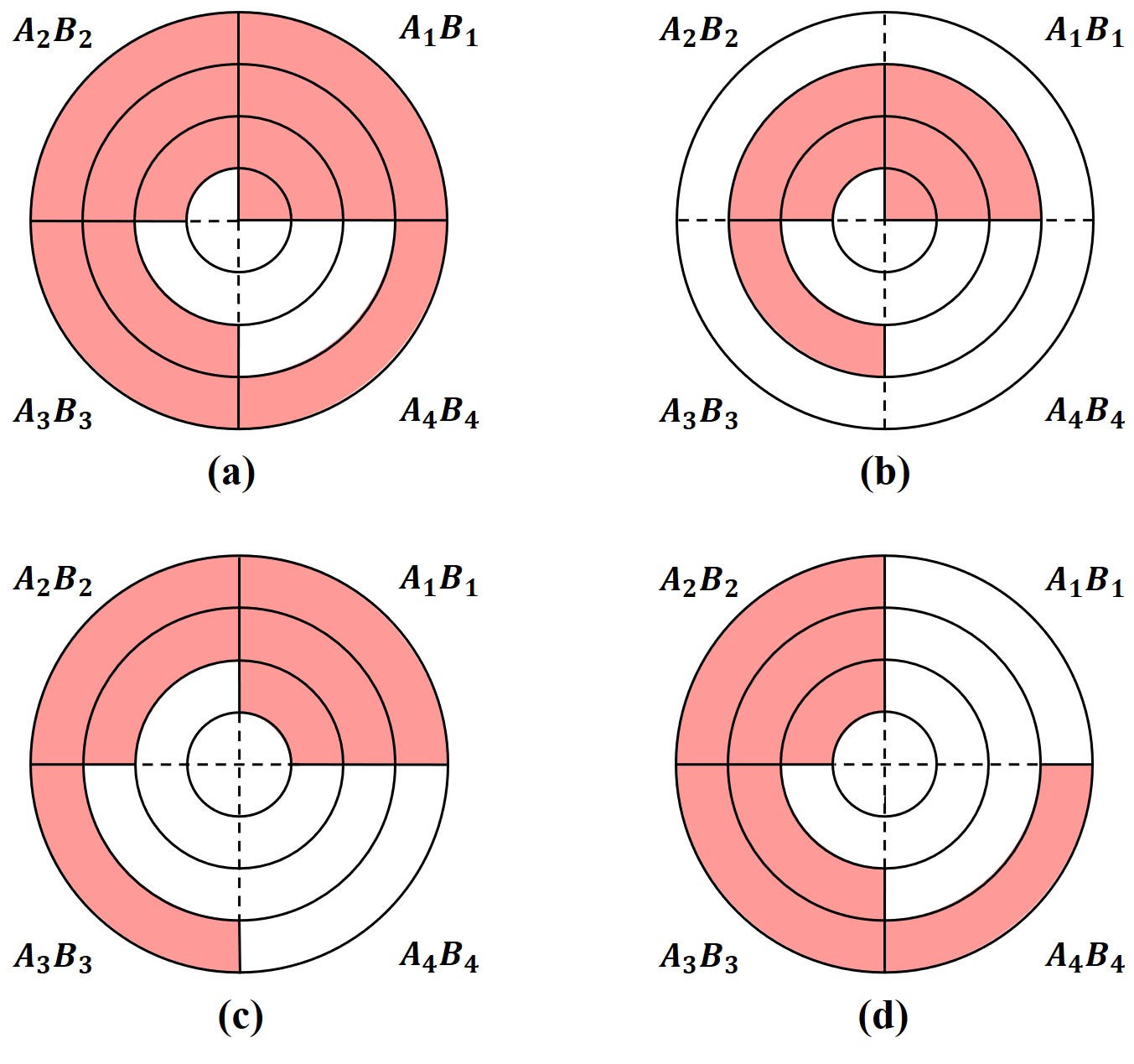}
    \caption{
    {\bf The LOCC operation $\Lambda^D$ defined in Eq.~\ref{eq:fc-duan}.}
    Here we provide a visualization of the LOCC operation $\Lambda^D$ for the case where $n=4$. Let's begin with (a), where we employ concentric circles to depict quantum states that are uniform mixtures of four tensor product states. Each layer, labeled by a basis state $\ket{t}$ of the auxiliary system $T$ ranging from the innermost to the outermost for $t=1,\ldots, 4$, symbolizes a component of the state. Specifically, the innermost concentric circle represents $\rho_{1}\otimes\Tr_{1}[\mE(\rho^{\otimes 4})]\otimes\ketbra{1}_{T}/4$, where $i\in\{1, \ldots, 4\}$ abbreviates systems $A_iB_i$. The red quarter stands for state $\rho$, while the white part in the $t$-th layer denotes $\mathrm{Tr}_{1\cdots t}[\mathcal{E}(\rho^{\otimes 4})]$. 
    Next, let's discuss the construction of $\Lambda^D$, which comprises three steps: First, applying $\mE$ to the state with classical register $\ketbra{4}_T$ yields the state depicted in (b). Second, after implementing a permutation to the classical register system $T$: $\ket{t}_T\to \ket{t+1}_T$ for $t<4$ and $\ket{4}_T\to \ket{1}_T$, we obtain (c). Third, a SWAP on systems $1$ and $t$ for in $t$-th layer leads to (d).
    }
    \label{fig:duan}
\end{figure}

\begin{figure*}
    \centering
    \includegraphics[width=1\textwidth]{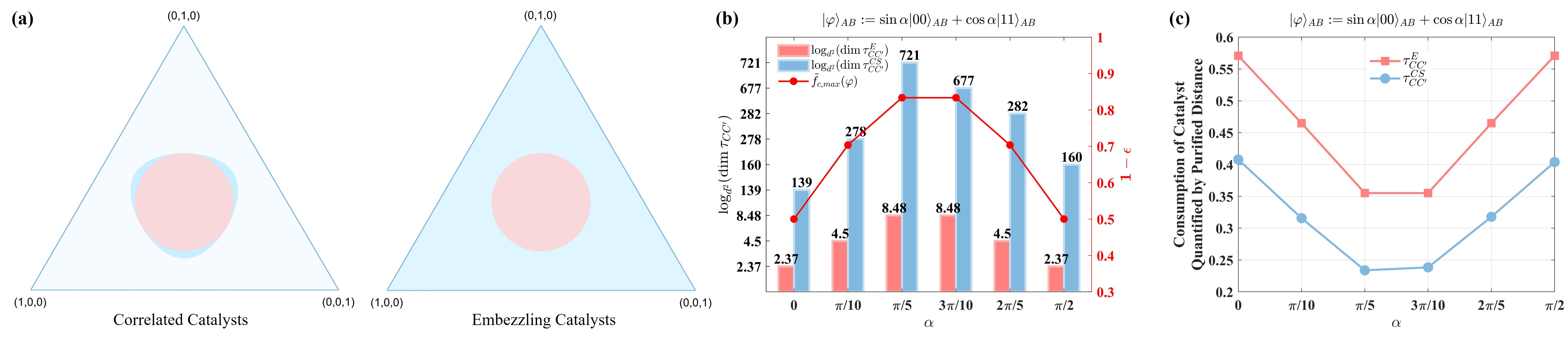}
    \caption{
     \textbf{Comparison of correlated catalysts and embezzling catalysts in teleportation:} 
     Figure (a) shows the triangle of all bipartite pure states of qutrits. Each point $\lambda=(\lambda_1, \lambda_2, \lambda_3)$ in the triangle represents a unique state (up to local unitaries) with Schmidt coefficients $\{\lambda_i\}$ for $1\leqslant i \leqslant 3$. The left part of (a) illustrates the performance of quantum teleportation with correlated catalysts. The pink region indicates states that already have an average fidelity of at least $0.9$ without any catalyst. The deep blue region indicates states that can achieve a fidelity higher than $0.9$ with the help of Duan states. The light blue area indicates states for which we cannot guarantee that their communication capability in teleportation can be elevated beyond $0.9$, even with the assistance of an infinite-dimensional Duan state. Similarly, the right part of (a) shows the performance of quantum teleportation with embezzling catalysts. The pink and deep blue regions have the same meaning as before, while the light blue region is absent, indicating that embezzling catalysts can boost the fidelity of any state above $0.9$.
     Figure (b) shows the lower bound on the catalytic fidelity achieved by infinite-dimensional correlated catalysts, as indicated by the red line. The red and blue bars compare the required dimensions of the catalytic systems, when using two different types of embezzling catalysts: $\tau^E_{CC^{'}}$ (see Eq.~\ref{eq:emb-state}) and $\tau^{CS}_{CC^{'}}$ (see Eq.~\ref{eq:cs-state}). The bars show the minimum dimension needed to reach the same lower bound as the correlated catalysts. Figure (c) illustrates the variation in different types of embezzling catalysts after the catalytic teleportation, measured in terms of purified distance.
     }
    \label{fig:compare}
\end{figure*}

Let's begin with the construction of the Duan state utilized in catalytic quantum teleportation, where the catalyst $\tau^D_{CC^{'}}$, namely the Duan state, is defined as
\begin{align}\label{eq:catalyst-duan}
    \tau^D_{CC^{'}}
    :=
    \frac{1}{n}\sum^n_{t=1}
    \underbrace{\rho^{\otimes t-1}
    \otimes 
    \Tr_{1\ldots t}[\mE(\rho^{\otimes n})]}_{A_2B_2\dots A_nB_n}
    \otimes\ketbra{t}_{T}.
\end{align}
In Eq.~\ref{eq:catalyst-duan}, $\rho$ stands for bipartite state shared between the sender Alice and the receiver Bob, and $\mE$ is a LOCC operation. Systems $C:= A_2A_3\ldots A_n T$
and $C^{'}:= B_2B_3\ldots B_n T$ are held by Alice and Bob respectively. $\Tr_{1\cdots t}$ denotes partial the trace over the first $t$ copies of $\mE(\rho^{\otimes n})$. Let $\Lambda^D$ represents the process of establishing quantum correlations between the initial bipartite state $\rho_{AB}$ and the catalyst $\tau^D_{CC^{'}}$, as outlined in Fig.~\ref{fig:duan}. An alternate visualization of this procedure is presented in Fig.~1 of Ref.~\cite{PhysRevLett.127.080502}. The performance of catalytic teleportation is characterized by the enhanced average fidelity, denoted as $f_{D}(\rho_{AB})$, which is given by
\begin{align}\label{eq:fc-duan}
    f_{D}(\rho)
    :=
    \max_{\stackrel{{\scriptstyle \mE\in\text{LOCC}}}{n\in\mathbb{Z}^{+}}} \int d\psi \bra{\psi}\Theta_0\circ\Lambda^D(\psi\otimes \rho\otimes\tau^D)\ket{\psi}.
\end{align}
The maximization is conducted over all LOCC operations $\mE\in \rm{LOCC}(A_1\ldots A_n: A_1\ldots A_n)$ and positive integers. Further assurance regarding the enhancement of catalytic quantum teleportation with the aid of the Duan state is provided by the lemma from Ref.~\cite{PhysRevLett.127.080502}.

\begin{lemma}[\cite{PhysRevLett.127.080502}]\label{lem:duan}
For any pure bipartite state $\varphi_{AB}$ shared between the sender, Alice, and the receiver, Bob, its performance with the Duan state is lower-bounded by 
\begin{align}\label{eq:fc-lb}
    f_{D}(\varphi_{AB})
    \geqslant 
    \max_{\phi_{AB}} \quad
    &f(\phi_{AB})
    \notag\\
    \rm{s.t.} \quad
    &S(\rho_A) \geqslant S(\sigma_A),
\end{align}
where $\rho_A$ and $\sigma_A$ represent the reduced systems of $\varphi_{AB}$ and $\phi_{AB}$ on subsystem $A$, respectively, while $S$ denotes the Shannon entropy.
\end{lemma}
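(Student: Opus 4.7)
The plan is to realise the right-hand side as the performance of a concrete catalytic protocol, by tuning the LOCC operation $\mE$ and the integer $n$ inside the definition of the Duan catalyst so that Alice and Bob effectively transform $\varphi_{AB}$ into the chosen target $\phi_{AB}$ before running the standard teleportation $\Theta_0$. Concretely, I would fix any candidate pure state $\phi_{AB}$ satisfying the entropy constraint $S(\sigma_A)\leqslant S(\rho_A)$ and exhibit $n$ and $\mE\in\text{LOCC}(A_1\ldots A_n:B_1\ldots B_n)$ such that, after $\Lambda^D$ (cf.\ Fig.~\ref{fig:duan}), the marginal on $A_1B_1$ is arbitrarily close to $\phi_{AB}$ while $\tau^D_{CC^{'}}$ comes back unchanged. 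Standard teleportation applied to $A_1B_1$ then yields average fidelity $f(\phi_{AB})$, and maximising over all admissible $\phi_{AB}$ gives the stated bound.

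For the existence of $(n,\mE)$, I would appeal to the asymptotic theory of bipartite pure-state LOCC transformations (entanglement concentration followed by dilution): whenever $S(\rho_A)\geqslant S(\sigma_A)$, for every $\delta>0$ there exist $n$ and an LOCC operation $\mE$ such that $\mE(\varphi_{AB}^{\otimes n})$ is within purified distance $\delta$ of $\phi_{AB}^{\otimes n}$. Tracing out all but the first copy then produces a state on $A_1B_1$ within purified distance $\delta$ of $\phi_{AB}$, while the remaining $n-1$ slots approximately realise the ``outer layer'' that the definition of $\tau^D_{CC^{'}}$ (see Eq.~\ref{eq:catalyst-duan}) uses for the partial-output structure $\Tr_{1\ldots t}[\mE(\rho^{\otimes n})]$.

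The catalytic preservation is where the design of $\tau^D_{CC^{'}}$ earns its keep. The state is a uniform convex combination, indexed by a classical register $T\in\{1,\ldots,n\}$, of ``rotated'' placements $\rho^{\otimes t-1}\otimes \Tr_{1\ldots t}[\mE(\rho^{\otimes n})]$. Following the three-step picture in Fig.~\ref{fig:duan}, $\Lambda^D$ (i) applies $\mE$ inside the single branch $T=n$, (ii) cyclically increments $T\mapsto T+1 \pmod{n}$, and (iii) swaps copy $1$ with copy $t$ inside the branch $T=t$. A direct bookkeeping argument across the branches shows that the marginal on $CC^{'}$ after $\Lambda^D$ is again exactly $\tau^D_{CC^{'}}$, while the marginal on $A_1B_1$ coincides with the first-copy reduction of $\mE(\varphi_{AB}^{\otimes n})$, which is $\delta$-close to $\phi_{AB}$ by the previous step.

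The main obstacle is controlling the two approximations that appear simultaneously: the LOCC transformation $\mE(\varphi^{\otimes n})\approx \phi^{\otimes n}$ is only asymptotic, and the Duan catalyst's structure was written for the exact image $\mE(\rho^{\otimes n})$. I would handle this by using the data-processing inequality and tensor invariance of the purified distance to propagate the $\delta$ error from the output slot through $\Theta_0$, and then taking $n\to\infty$ inside the supremum in Eq.~\ref{eq:fc-duan}. Finally, combining the resulting closeness of the $A_1B_1$ state to $\phi_{AB}$ with the entanglement-fraction/fidelity identity of Eq.~\ref{eq:teleportation-fidelity} converts the bound into the desired lower estimate on $f_D(\varphi_{AB})$.
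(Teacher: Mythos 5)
The paper itself contains no proof of this lemma---it is imported directly from Ref.~\cite{PhysRevLett.127.080502}---and your proposal correctly reconstructs the argument used there: asymptotic pure-state LOCC convertibility (entanglement concentration followed by dilution) under the entropy condition $S(\rho_A)\geqslant S(\sigma_A)$, combined with the exact self-restoration of the Duan catalyst under $\Lambda^D$ and a limit in $n$ inside the supremum of Eq.~\ref{eq:fc-duan}. Two small corrections that do not affect the bound: the post-protocol marginal on $A_1B_1$ is the uniform average over $t$ of the single-copy reductions of $\mE(\varphi^{\otimes n})$ onto copy $t$, not the first-copy reduction alone (each such reduction is $\delta$-close to $\phi_{AB}$ and the entanglement fraction is linear, so the conclusion is unchanged); and the catalyst's exact preservation is never threatened by the approximation in $\mE$, since $\tau^D_{CC^{'}}$ is defined in terms of the actual image $\mE(\rho^{\otimes n})$, so the only place the $\delta$-error enters is the output fidelity.
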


The aforementioned lemma guarantees the efficacy of catalytic quantum teleportation when employing the Duan state. A comparison with embezzling catalysts is illustrated in Fig.~\ref{fig:compare}(a), focusing on qutrit scenarios. In the left portion of Fig.~\ref{fig:compare}(a), we depict the performance of catalytic quantum teleportation: the pink region denotes bipartite states with an original average fidelity already greater than or equal to $0.9$, while the deep blue region represents states where, aided by the infinite-dimensional Duan state, i.e., $n\to\infty$ in Eq.~\ref{eq:catalyst-duan}, the final performance can surpass $0.9$. The light blue area signifies states for which we cannot ensure that their communication capability in teleportation can be elevated beyond $0.9$, even with the assistance of an infinite-dimensional Duan state. Conversely, as shown in the right portion of Fig.~\ref{fig:compare}(a), all states with an original average fidelity less than $0.9$ can be enhanced to exceed $0.9$ through the use of embezzling catalysts.

This example shows two benefits of using embezzling catalysts. Firstly, if we restrict ourselves to strict catalysts---those that remain unchanged after the communication process--- we may not observe significant enhancements in teleportation performance, even with an infinite-dimensional catalyst. In practice, preparing an infinite-dimensional catalyst and manipulating all systems is nearly impossible. Therefore, in such cases, embezzling catalysts with lower dimensions offer a more feasible solution. Secondly, the construction of $\tau^D_{CC^{'}}$ defined in Eq.~\ref{eq:catalyst-duan} relies on the initial state shared between the sender, Alice, and the receiver, Bob, rendering it non-universal. Conversely, embezzling catalysts, such as $\tau^{E}_{CC^{'}}$ considered in Eq.~\ref{eq:emb-state}, are universal, meaning they can be utilized without prior knowledge of the initial state shared between Alice and Bob. 

Addressing the practicality of protocols often hinges on overcoming dimensional constraints. To further explore this, let's explore another example regarding dimension requirements in catalytic teleportation and the the teleportation with embezzling catalysts. Consider a state $\ket{\varphi}_{AB}:= \sin\alpha\ket{00}_{AB}+\cos\alpha\ket{11}_{AB}$ with a varying parameter $\alpha$. Its average fidelity with an infinite-dimensional Duan state is guaranteed to surpass a certain threshold, as depicted in Fig.~\ref{fig:compare}(b). However, achieving the same task is feasible using finite-dimensional embezzling catalysts. A comparison of dimension requirements for catalytic systems between the convex-split lemma (see Lem.~\ref{lem:convex-split}) and embezzling states (see Eq.~\ref{eq:emb-state}) is presented in Fig.~\ref{fig:compare}(b). The result highlights how embezzling techniques offer advantages in dimensionality reduction, thereby bringing them closer to practical implementation. However, it is important to note that these benefits come at the cost of increased catalyst consumption in terms of purified distance, as depicted in Fig.~\ref{fig:compare}(c). Both Figs.~\ref{fig:embezzling-convex} and \ref{fig:compare} indicate a fundamental trade-off between the dimension requirements for embezzling catalysts and the variation in the catalytic system necessary to achieve a fixed amount of improvement. Systematic investigation and formulation of this trade-off into a quantitative framework will require further research and is left for future exploration.

\section{Entanglement Distillation with Embezzling Catalysts}\label{subsection:distillation}
The essence of teleportation with embezzling catalysts lies in enhancing single-shot entanglement distillation using these catalysts. In the single-shot setting, if Alice and Bob share a noisy bipartite entangled state $\rho$, and they are limited to LOCC operations, the optimal performance they can achieve in single-shot entanglement distillation is given by
\begin{align}
    \max_{\mE\in\text{LOCC}}F_U(\mE(\rho),\phi^+_d).
\end{align}
If Alice and Bob employ additional embezzling catalyst $\tau_{CC^{'}}$, the optimal performance of single-shot entanglement distillation can be increased to
\begin{align}\label{eq:ss-ed}
    \max_{\mE\in\text{LOCC}}F_U(\Tr_{CC^{'}}[\mE(\rho\otimes\tau)],\phi^+_d).
\end{align}
The performance of catalytic single-shot entanglement distillation is clearly determined by the selection and design of the embezzling catalysts. In the embezzling-state-assisted protocol, as a direct consequence of Lem.~\ref{lem:emb-state}, if an error tolerance of $\epsilon$ is 
permissible, employing an embezzling state $\tau^{E}$ (see Eq.~\ref{eq:emb-state}) with a dimension of at least $\lceil d^{1/(1-\sqrt{1-\epsilon})} \rceil$ allows us to achieve a fidelity $F_U$ of at least $1-\epsilon$. Alternatively, by replacing the embezzling state $\tau^{E}$ in catalytic single-shot entanglement distillation (see Eq.~\ref{eq:ss-ed}) with $\tau^{CS}$ (see Eq.~\ref{eq:cs-state}), we can implement the convex-split-lemma-assisted single-shot entanglement distillation. The corresponding performance is detailed in the following theorem.

\begin{theorem}
[{\bf (Convex-Split-Lemma-Assisted Single-Shot Entanglement Distillation)}]
    Given a bipartite quantum state $\rho$ on systems $AB$ with dimension $d$ and an error threshold $\epsilon>0$, we can construct a bipartite state $\tau:= p\phi^+_m + (1-p)\zeta$ such that $P(\tau, \phi^+_{m})\leqslant\sqrt{\epsilon}/2$. Here, $\zeta$ is a full-ranked quantum state. 
    Define $k:= D_{\rm{max}}(\rho\parallel\tau)$ as the max-relative entropy of $\rho$ with respect to $\tau$. Under these conditions, there exists a catalyst $\tau^{CS}:= \tau^{\otimes n-1}$ with $n:= \left\lceil 2^{k+2}/\epsilon \right\rceil$ that facilitates
\begin{align}
    F_U\left(\Tr_{CC^{'}}[\Lambda^{CS}(\rho\otimes \tau^{CS})], \phi^+_d \right)\geqslant 1-\epsilon.
\end{align}
The LOCC operation $\Lambda^{CS}$ is defined in Eq.~\ref{eq:Lambda-cs}, and the catalyst state $\tau^{CS}$ acts on systems $CC^{'}$, where $C := A_2 \ldots A_n$ and $C^{'} := B_2 \ldots B_n$, with $A_i = A$ and $B_i = B$ for all $i$.
\end{theorem}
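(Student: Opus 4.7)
The plan is to mirror the argument used for Theorem~\ref{thm:CS}, stripping away the teleportation-specific bookkeeping (the average over Haar-random messages and the factor $(d+1)/d$) and terminating the chain of inequalities at the purified-distance level between the reduced output state and $\phi^+_d$. Since the hypothesis already hands us a state $\tau$ satisfying $P(\tau,\phi^+_m)\leqslant\sqrt{\epsilon}/2$, essentially all the work reduces to assembling Lem.~\ref{lem:convex-split}, the triangle inequality, the tensor invariance of $P$, and the quantum data processing inequality in the right order.

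First, I would invoke Lem.~\ref{lem:convex-split} directly on $\rho\otimes\tau^{\otimes n-1}$ with $k=D_{\max}(\rho\parallel\tau)$, which yields
\begin{align}
P\!\left(\Lambda^{CS}(\rho\otimes\tau^{CS}),\ \tau^{\otimes n}\right)\leqslant \sqrt{\frac{2^k}{n}}.
\end{align}
Plugging in the stipulated value $n=\lceil 2^{k+2}/\epsilon\rceil$ gives $\sqrt{2^k/n}\leqslant \sqrt{\epsilon}/2$. Second, I would write $\tau^{\otimes n}=\tau\otimes\tau^{CS}$ and $\phi^+_m\otimes\tau^{CS}$, and use the triangle inequality together with tensor invariance of the purified distance to conclude
\begin{align}
P\!\left(\Lambda^{CS}(\rho\otimes\tau^{CS}),\ \phi^+_m\otimes\tau^{CS}\right)\leqslant \frac{\sqrt{\epsilon}}{2}+P(\tau,\phi^+_m)\leqslant \sqrt{\epsilon}.
\end{align}

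Third, I would apply the quantum data processing inequality to the channel $\Tr_{CC^{'}}$ to pull this bound down onto the reduced state, obtaining $P(\Tr_{CC^{'}}[\Lambda^{CS}(\rho\otimes\tau^{CS})],\phi^+_m)\leqslant\sqrt{\epsilon}$, and then convert via Eq.~\ref{eq:pur-FU} to $F_U\geqslant 1-\epsilon$. The identification of $\phi^+_m$ with the target $\phi^+_d$ follows from the fact that $\tau$ lives on the same $AB$ systems as $\rho$, so $m=d$ in the statement; making this matching explicit is the one mildly delicate bookkeeping step, but it is immediate once one notes that $\tau$ must share the support structure of $\rho$ to make $D_{\max}(\rho\parallel\tau)$ finite.

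The main obstacle, if any, is conceptual rather than technical: one must recognize that the $\epsilon(d+1)/d$ factor appearing in Thm.~\ref{thm:CS} comes entirely from Eq.~\ref{eq:teleportation-fidelity} translating entanglement fraction into average teleportation fidelity, so in the distillation setting where the output metric is already the Uhlmann fidelity with $\phi^+_d$, that factor disappears and one simply needs the purified-distance budget to split evenly between the convex-split error ($\sqrt{\epsilon}/2$) and the approximation $P(\tau,\phi^+_d)\leqslant\sqrt{\epsilon}/2$. Everything else is a direct transcription of the earlier argument, so no new ingredient is required.
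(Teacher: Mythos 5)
Your proposal is correct and follows essentially the same route as the paper's own (very terse) proof: invoke Lem.~\ref{lem:convex-split} to get the $\sqrt{2^k/n}\leqslant\sqrt{\epsilon}/2$ bound, combine with $P(\tau,\phi^+_d)\leqslant\sqrt{\epsilon}/2$ via the triangle inequality and tensor invariance, and finish with the data processing inequality under $\Tr_{CC'}$. Your version is simply a more explicit transcription of the same argument, including the correct observation that $m=d$ and that the $\epsilon(d+1)/d$ factor of Thm.~\ref{thm:CS} is absent here because no conversion to average teleportation fidelity is needed.
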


\begin{proof}
Building on Lem.~\ref{lem:convex-split} and the triangle inequality of purified distance, we obtain
\begin{align}
    P(\Lambda^{CS}(\rho\otimes \tau^{CS}),\phi^+_d\otimes \tau^{CS})\leqslant\sqrt{\epsilon}.
\end{align}
By further applying the quantum data processing inequality, we derive the desired result and complete the proof.
\end{proof}

\begin{figure*}
    \centering
    \includegraphics[width=1\textwidth]{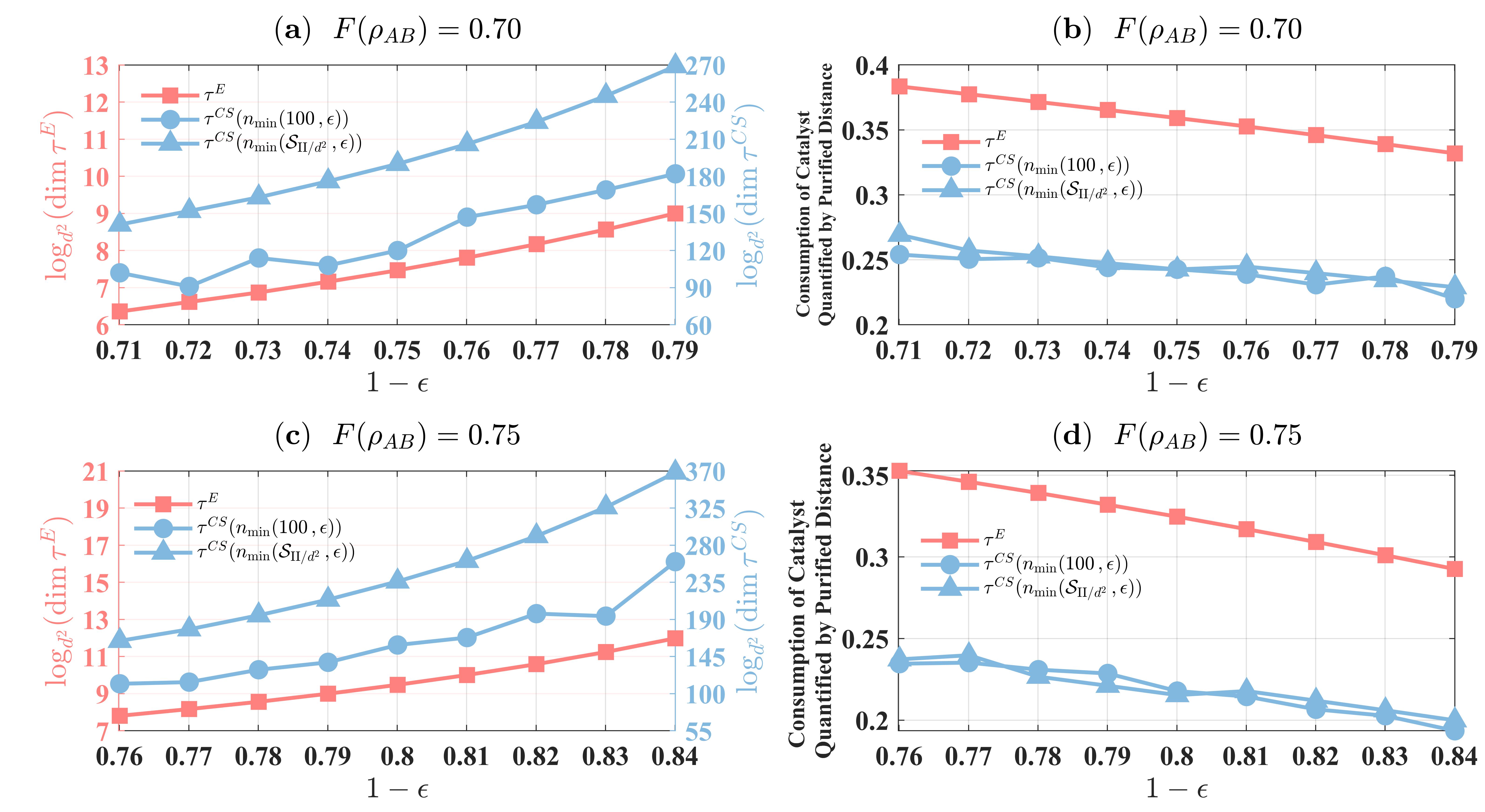}
    \caption{
    \textbf{Comparisons across different catalytic single-shot entanglement distillation protocols:} 
Figures (a) and (c) depict the required catalyst dimensions for achieving the specified fidelity in catalytic single-shot entanglement distillation from random initial states, with fidelities of $0.70$ and $0.75$, as shown in TABLE~\ref{tab:state-CSdis}. In these figures, $\tau^E$ represents the catalyst dimension based on embezzling states (see Eq.~\ref{eq:emb-state}), while $\tau^{CS}(n_{\rm min}(100\,, \epsilon))$ and $\tau^{CS}(n_{\rm min}(\mS_{\mI/d^2}\,, \epsilon))$ indicate the catalyst dimensions constructed using the convex-split lemma (see Lem.~\ref{lem:convex-split}). The former method involves selecting $100$ randomly chosen full-rank states for constructing $\tau$ (see Eq.~\ref{eq:cs-state}), whereas the latter employs maximally mixed states $\mI/d^2$. Figures (b) and (d) illustrate the consumption of embezzling catalysts during entanglement distillation, measured in terms of purified distance. The blue line represents the upper bound of the consumption for the embezzling catalyst $\tau^{CS}$ (see Eq.~\ref{eq:consumption-convex}), while the pink line indicates the exact consumption of the embezzling state $\tau^E$ in catalytic entanglement distillation (see Eq.~\ref{eq:consumption-emb}). These comparisons demonstrate that while the embezzling-state-assisted protocol exhibits superior performance at the same catalyst dimension as the convex-split-lemma-assisted protocol, it incurs a greater deviation from its original form during the catalytic entanglement distillation.
    }
    \label{fig:CS-dis}
\end{figure*}

We would like to highlight that the dimension-reduction method for catalytic systems discussed in Subsec.~\ref{subsec:DRC}, which utilizes randomly generated full-ranked quantum states over the maximal mixed state, remains applicable to single-shot entanglement distillation. Similar to catalytic teleportation, the embezzling-assisted protocol generally achieves better performance compared to the convex-split-lemma-assisted single-shot entanglement distillation when the dimensions of the catalytic systems are identical. However, this advantage comes with increased variation in the catalysts. Numerical experiments illustrating these results are presented in Fig.~\ref{fig:CS-dis}.

\section{Discussion}\label{sec:conclusion}
In this work, we have introduced teleportation with embezzling catalysts, which uses an extra entangled resource, called an embezzling catalyst, to boost the fidelity of teleportation. Unlike exact catalysts, embezzling catalysts are only slightly changed after the teleportation, but can still be recycled for other purposes. By relaxing the constraint of preserving the catalysts throughout the information processing, we can achieve universal catalysts, and teleport quantum information with arbitrary accuracy. To implement our scheme, we constructed embezzling catalysts based on two different methods: the convex-split lemma~\cite{PhysRevLett.119.120506} and the embezzling state~\cite{PhysRevA.67.060302}. Both methods can achieve arbitrary precision teleportation using finite-dimensional embezzling catalysts. As a by-product, we have also investigated single-shot entanglement distillation using the convex-split lemma.

Embezzling phenomena in quantum information theory is the surprising ability to extract resources, such as entanglement, from a reference state without significantly altering it. This is analogous to taking a cup of water from the sea and leaving the sea almost unchanged. However, the dimensionality of the reference state, or the catalyst, is crucial for this task. It is hard to manipulate entangled states precisely in high-dimensional systems. Therefore, we need to use low-dimensional catalysts for high-precision tasks. Here, we explore how to reduce the dimensionality of embezzling catalysts based on the convex-split lemma. We also perform numerical experiments to show that our method outperforms existing ones ~\cite{PhysRevLett.127.080502} in dimensionality.

Quantum teleportation is one of many quantum communication protocols. It is natural to wonder if similar ideas can improve other quantum information-theoretic tasks~\cite{PhysRevLett.67.661,PhysRevLett.69.2881,PhysRevLett.71.4287}. For instance, can catalysts such as correlated catalysts or embezzling catalysts increase the capacities of communication channels? This question is fundamental in quantum information theory and has practical implications. However, it is beyond the scope of this work and we leave it for future research~\cite{CP}.

\section*{Acknowledgments}
We would like to thank Xiao-Min Hu, and Yu Guo for fruitful discussions.
This research is supported by the National Research Foundation, Singapore and A*STAR under its Quantum Engineering Programme (NRF2021-QEP2-02-P03), and by A*STAR under its Central Research Funds and C230917003. Junjing Xing, Yuqi Li, Zhaobing Fan, and Haitao Ma are supported by the Stable Supporting Fund of National Key Laboratory of Underwater Acoustic Technology (KY12400210010). Dengke Qu, Lei Xiao, and Peng Xue are supported by the National Key R\&D Program of China (Grant No. 2023YFA1406701), the National Natural Science Foundation of China (Grant Nos. 12025401, 92265209, 12104036, and 12305008), and the China Postdoctoral Science Foundation (Grant Nos. BX20230036 and 2023M730198).

\bibliography{Bib}

\setcounter{section}{0}
\newcounter{EDfig}
\renewcommand{\thesection}{Appendix~\Alph{section}}

\section{Random States Data}\label{Appendix}
\begin{table}[h]
\scriptsize
     \renewcommand\arraystretch{2.5}
\begin{tabular}{|c|c|}
\hline
 $f(\rho)$ & Density Matrix of $\rho$\\
 \hline
 0.75 & $\left[
 \setlength{\arraycolsep}{5.3pt}
  \begin{array}{cccc}
     0.27 &  0.03-0.03\text{i} & 0.04-0.12\text{i} &  0.28-0.07\text{i}\\
     0.03+0.03\text{i} & 0.13 & 0.06-0.01\text{i} &  0.01+ 0.08\text{i}\\
     0.04+0.12\text{i} &  0.06+0.01\text{i} & 0.19 &  0.07+0.21\text{i}\\
     0.28+0.07\text{i} &  0.01- 0.08\text{i} & 0.07-0.21\text{i} &  0.41\\
  \end{array}
\right]$\\
\hline
 0.80 & $\left[{
  \begin{array}{cccc}
    0.26 & -0.04-0.02\text{i} & -0.02+0.05\text{i} &  0.30+0.06\text{i}\\
    -0.04+0.02\text{i} & 0.06 & 0.06+0.01\text{i} &  0.01+ 0.00\text{i}\\
    -0.02-0.05\text{i} & 0.06-0.01\text{i} & 0.14 &  0.01- 0.11\text{i}\\
    0.30-0.06\text{i} & 0.01- 0.00\text{i} & 0.01+ 0.11\text{i} &  0.54\\
  \end{array}
  }
\right]$\\
\hline
\end{tabular}   
\caption{{\bf The entangled initial states considered in Fig.~\ref{fig:opt-advantage}(a) and Fig.~\ref{fig:embezzling-convex}.}} 
\label{tab:state-figa}
\end{table}

\begin{table*}[t]
     \renewcommand\arraystretch{2.5}
     \scriptsize
\begin{tabular}{|c|c|c|c|}
\hline
\multicolumn{4}{|c|}{ Density Matrix of $\rho$} \\
 \hline
 $\rho_1$ & $\left[{
  \begin{array}{cccc}
       0.33      &	-0.06 + 0.02\text{i}  &	-0.06 + 0.02\text{i} &	0.13 - 0.09\text{i}\\
   -0.06 - 0.02\text{i} &	0.20           &	0.09 + 0.00\text{i} &	-0.03 - 0.07\text{i}\\
   -0.06 - 0.02\text{i} &	0.09 + 0.00\text{i}   &	0.30          &	-0.03 - 0.01\text{i}\\
    0.13 + 0.09\text{i} & -0.03 + 0.07\text{i}  &	   -0.03 + 0.01\text{i} &	0.17\\
  \end{array}
  }
\right]$ &
$\rho_2$ & $\left[{
  \begin{array}{cccc}
0.15 &	0.07 + 0.02\text{i} &	-0.13 + 0.12\text{i} &	0.03 - 0.02\text{i}\\
0.07 - 0.02\text{i} &	0.16 & 	-0.07 - 0.06\text{i} &	-0.04 - 0.05\text{i}\\
-0.13 - 0.12\text{i} &	-0.07 + 0.06\text{i} &	0.55 & 	-0.12 - 0.04\text{i}\\
0.03 + 0.02\text{i} &	-0.04 + 0.05\text{i} &	-0.12 + 0.04\text{i} &	0.13\\
  \end{array}
  }
\right]$\\
\hline
 $\rho_3$ & $\left[{
 \setlength{\arraycolsep}{3.8pt}
  \begin{array}{cccc}
0.17 &	0.08 + 0.01\text{i} &	\makebox[\widthof{-0.01 + 0.08\text{i}}][c]{0.09 + 0.01\text{i}} &	-0.01 + 0.08\text{i}\\
0.08 - 0.01\text{i} &	0.11 &	0.01 - 0.03\text{i} &	-0.04 + 0.08\text{i}\\
0.09 - 0.01\text{i} &	0.01 + 0.03\text{i} &	0.24 &	0.06 - 0.01\text{i}\\
-0.01 - 0.08\text{i} &	-0.04 - 0.08\text{i} &	0.06 + 0.01\text{i} &	0.48\\
  \end{array}
  }
\right]$ &
 $\rho_4$ & $\left[{
 \setlength{\arraycolsep}{4.6pt}
  \begin{array}{cccc}
0.19 &	0.05 - 0.12\text{i} &	0.02 + 0.01\text{i} &	0.04 - 0.10\text{i}\\
0.05 + 0.12\text{i} &	0.26 &	-0.02 + 0.09\text{i} &	0.21 - 0.07\text{i}\\
0.02 - 0.01\text{i} &	-0.02 - 0.09\text{i} &	0.19 &	0.08 - 0.11\text{i}\\
0.04 + 0.10\text{i} &	0.21 + 0.07\text{i} &	0.08 + 0.11\text{i} &	0.36\\
  \end{array}
  }
\right]$\\
\hline
 $\rho_5$ & $\left[{
  \begin{array}{cccc}
0.32 &	-0.04 + 0.08\text{i} &	0.07 - 0.22\text{i} &	0.07 + 0.04\text{i}\\
-0.04 - 0.08\text{i} &	0.21 &	-0.05 - 0.02\text{i} &	0.09 - 0.16\text{i}\\
0.07 + 0.22\text{i} &	-0.05 + 0.02\text{i} &	0.23 &	-0.02 + 0.04\text{i}\\
0.07 - 0.04\text{i} &	0.09 + 0.16\text{i} &	-0.02 - 0.04\text{i} &	0.24\\
  \end{array}
  }
\right]$ &
 $\rho_6$ & $\left[{
  \begin{array}{cccc}
0.16 &	-0.03 + 0.05\text{i} &	-0.03 - 0.08\text{i} &	-0.06 + 0.05\text{i}\\
-0.03 - 0.05\text{i} &	0.11 &	-0.09 + 0.04\text{i} &	0.04 - 0.01\text{i}\\
-0.03 + 0.08\text{i} &	-0.09 - 0.04\text{i} &	0.30 &	-0.20 + 0.13\text{i}\\
-0.06 - 0.05\text{i} &	0.04 + 0.01\text{i} &	-0.20 - 0.13\text{i} &	0.43\\
  \end{array}
  }
\right]$\\
\hline
 $\rho_7$ & $\left[{
  \begin{array}{cccc}
0.42 &	-0.05 - 0.1\text{i} &	0.03 + 0.01\text{i} &	-0.08 + 0.03\text{i}\\
-0.05 + 0.10\text{i} &	0.13 &	0.15 + 0.04\text{i} &	-0.03 - 0.08\text{i}\\
0.03 - 0.01\text{i} &	0.15 - 0.04\text{i} &	0.27 &	-0.07 - 0.06\text{i}\\
-0.08 - 0.03\text{i} &	-0.03 + 0.08\text{i} &	-0.07 + 0.06\text{i} &	0.18\\
  \end{array}
  }
\right]$&
 $\rho_8$ & $\left[{
  \begin{array}{cccc}
0.28 &	-0.08 - 0.12\text{i} &	0.23 + 0.13\text{i} &	0.02 - 0.16\text{i}\\
-0.08 + 0.12\text{i} &	0.14 &	-0.15 + 0.05\text{i} &	0.08 + 0.08\text{i}\\
0.23 - 0.13\text{i} &	-0.15 - 0.05\text{i} &	0.34 &	-0.13 - 0.13\text{i}\\
0.02 + 0.16\text{i} &	0.08 - 0.08\text{i} &	-0.13 + 0.13\text{i} &	0.24\\
  \end{array}
  }
\right]$\\
\hline
 $\rho_9$ & $\left[{
  \begin{array}{cccc}
0.10 &	-0.04 - 0.01\text{i} &	-0.05 - 0.04\text{i} &	0.05 + 0.05\text{i}\\
-0.04 + 0.01\text{i} &	0.29 &	0.04 - 0.01\text{i} &	-0.17 - 0.01\text{i}\\
-0.05 + 0.04\text{i} &	0.04 + 0.01\text{i} &	0.47 &	-0.00 - 0.02\text{i}\\
0.05 - 0.05\text{i} &	-0.17 + 0.01\text{i} &	-0.00 + 0.02\text{i} &	0.13\\
  \end{array}
  }
\right]$ &
$\rho_{10}$ & $\left[{
\setlength{\arraycolsep}{3pt}
  \begin{array}{cccc}
0.13 &	0.04 - 0.05\text{i} &	-0.04 + 0.00\text{i} &	-0.03 - 0.09\text{i}\\
0.04 + 0.05\text{i} &	0.28 &	-0.00 + 0.03\text{i} &	0.11 - 0.09\text{i}\\
-0.04 + 0.00\text{i} &	-0.00 - 0.03\text{i} &	0.37 &	0.10 - 0.13\text{i}\\
-0.03 + 0.09\text{i} &	0.11 + 0.09\text{i} &	0.10 + 0.13\text{i} &	0.22\\
  \end{array}
  }
\right]$\\
\hline
 $\rho_{11}$ & $\left[{
  \begin{array}{cccc}
0.33 &	-0.05 - 0.04\text{i} &	0.09 - 0.05\text{i} &	0.03 + 0.01\text{i}\\
-0.05 + 0.04\text{i} &	0.17 &	-0.11 - 0.06\text{i} &	-0.10 - 0.01\text{i}\\
0.09 + 0.05\text{i} &	-0.11 + 0.06\text{i} &	0.26 &	0.04 + 0.06\text{i}\\
0.03 - 0.01\text{i} &	-0.1 + 0.01\text{i} &	0.04 - 0.06\text{i} &	0.24\\
  \end{array}
  }
\right]$ & 
 $\rho_{12}$ & $\left[{
  \begin{array}{cccc}
0.28 &	-0.03 + 0.04\text{i} &	-0.07 - 0.06\text{i} &	-0.10 - 0.01\text{i}\\
-0.03 - 0.04\text{i} &	0.31 &	0.02 - 0.10\text{i} &	-0.02 - 0.04\text{i}\\
-0.07 + 0.06\text{i} &	0.02 + 0.10\text{i} &	0.23 &	0.06 + 0.07\text{i}\\
-0.10 + 0.01\text{i} &	-0.02 + 0.04\text{i} &	0.06 - 0.07\text{i} &	0.18\\
  \end{array}
  }
\right]$\\
\hline
 $\rho_{13}$ & $\left[{
  \begin{array}{cccc}
0.31 &	-0.04 + 0.02\text{i} &	-0.05 + 0.03\text{i} &	-0.06 - 0.02\text{i}\\
-0.04 - 0.02\text{i} &	0.21 &	-0.09 + 0.06\text{i} &	0.18 + 0.05\text{i}\\
-0.05 - 0.03\text{i} &	-0.09 - 0.06\text{i} &	0.15 &	-0.07 - 0.06\text{i}\\
-0.06 + 0.02\text{i} &	0.18 - 0.05\text{i} &	-0.07 + 0.06\text{i} &	0.33\\
  \end{array}
  }
\right]$ & 
 $\rho_{14}$ & $\left[{
  \begin{array}{cccc}
0.33 &	-0.10 - 0.02\text{i} &	-0.07 + 0.01\text{i} &	0.03 - 0.06\text{i}\\
-0.10 + 0.02\text{i} &	0.22 &	-0.01 - 0.09\text{i} &	-0.15 + 0.01\text{i}\\
-0.07 - 0.01\text{i} &	-0.01 + 0.09\text{i} &	0.23 &	0.02 + 0.04\text{i}\\
0.03 + 0.06\text{i} &	-0.15 - 0.01\text{i} &	0.02 - 0.04\text{i} &	0.23\\
  \end{array}
  }
\right]$\\
\hline
 $\rho_{15}$ & $\left[{
 \setlength{\arraycolsep}{3pt}
  \begin{array}{cccc}
0.23 &	0.05 - 0.05\text{i} &	0.06 - 0.06\text{i} &	-0.11 - 0.10\text{i}\\
0.05 + 0.05\text{i} &	0.28 &	0.09 - 0.08\text{i} &	-0.08 + 0.09\text{i}\\
0.06 + 0.06\text{i} &	0.09 + 0.08\text{i} &	0.16 &	-0.03 + 0.00\text{i}\\
-0.11 + 0.10\text{i} &	-0.08 - 0.09\text{i} &	-0.03 + 0.00\text{i} &	0.33\\
  \end{array}
  }
\right]$ & & \\
\hline
\end{tabular}   
\caption{{\bf The separable initial states considered in Fig.~\ref{fig:opt-advantage}(b).}} 
\label{tab:state-figb-s}
\end{table*}

\begin{table}[h]
\scriptsize
     \renewcommand\arraystretch{2.5}
\begin{tabular}{|c|c|}
\hline
 $F(\rho)$ & Density Matrix of $\rho$\\
 \hline
 0.70 & $\left[
 \setlength{\arraycolsep}{1.4pt}
  \begin{array}{cccc}
     0.28 &  0.04-0.08\text{i} & -0.03-0.03\text{i} &  0.33+0.01\text{i}\\
     0.04+0.08\text{i} & 0.14 & -0.11+0.01\text{i} &  0.08- 0.07\text{i}\\
    -0.03+0.03\text{i} &  -0.11-0.01\text{i} & 0.12 &  -0.05-0.01\text{i}\\
     0.33-0.01\text{i} &  0.08+ 0.07\text{i} & -0.05+0.01\text{i} &  0.46\\
  \end{array}
\right]$\\
\hline
 0.75 & $\left[{
  \begin{array}{cccc}
    0.41 & -0.02+0.12\text{i} & 0.01+0.06\text{i} &  0.29+0.23\text{i}\\
    -0.02-0.12\text{i} & 0.06 & 0.01-0.01\text{i} &  0.07-0.12\text{i}\\
    0.01-0.06\text{i} & 0.01+0.01\text{i} & 0.03 &  0.05+0.01\text{i}\\
    0.29-0.23\text{i} & 0.07+0.12\text{i} & 0.05-0.01\text{i} &  0.50\\
  \end{array}
  }
\right]$\\
\hline
\end{tabular}   
\caption{{\bf The entangled initial states considered in Fig.~\ref{fig:CS-dis}.}} 
\label{tab:state-CSdis}
\end{table}

\end{document}